\theoremstyle{theorem}
\newtheorem{proposition}{Proposition}
\def\infrule#1#2#3{\@ifnextchar[{\@infrule{#1}{#2}{#3}}{\@infrule{#1}{#2}{#3}[*]}}%
\def\@infrule#1#2#3[#4]{
 \par\bigbreak
 \vtop{
  %\hangindent3em\hangafter2
  \leavevmode\null
  \textsf{#1:}\kern.5em\textbf{#2}\\[\smallskipamount]
  \ifx*#4 % no optional argument given
   \null\qquad$#3$
  \else % optional argument exists
   \setbox30=\hbox{\qquad$#3$\qquad#4}%
   \ifdim\wd30>\hsize
    \null\qquad$#3$\par\kern-\parskip\smallskip#4
   \else
    \null\qquad$#3$\qquad#4
   \fi
  \fi
 }%
} \makeatother
\newcommand{\moregen}{\mathrel{\leq\kern-2.9pt\raise0.8pt\hbox{\llap{$\cdot$}}}}
\newcommand{\ignore}[1]{}
 \newcommand{\cE}{\mathcal E}
 \newcommand{\dom}{\mathrm{Dom}}
 \newcommand{\ran}{\mathrm{Ran}}
 \newcommand{\pahs}{\mathit{pahs}}
 \newcommand{\pahm}{\mathit{pahm}}
 \newcommand{\depth}{\mathit{depth}}
 \newcommand{\Lra}{\Longrightarrow}
 \newcommand{\size}[1]{{|#1|}}
 \newcommand{\vars}{\mathrm{Vars}}
\newcommand{\head}{\mathit{head}}
\newcommand{\ids}{\emptyset}
\newcommand{\vect}[1]{\ensuremath{\vv{#1}}}
\newcommand*{\ldblbrace}{\{\mskip-5mu\{}
\newcommand*{\rdblbrace}{\}\mskip-5mu\}}
\newcommand{\sA}{{\sf A}}
\newcommand{\sC}{{\sf C}}
\def\infrule#1#2#3{\@ifnextchar[{\@infrule{#1}{#2}{#3}}{\@infrule{#1}{#2}{#3}[*]}}%
\def\@infrule#1#2#3[#4]{
 \par\bigbreak
 \vtop{
  %\hangindent3em\hangafter2
  \leavevmode\null
  \textsf{#1:}\kern.5em\textbf{#2}\\[\smallskipamount]
  \ifx*#4 % no optional argument given
   \null\qquad$#3$
  \else % optional argument exists
   \setbox30=\hbox{\qquad$#3$\qquad#4}%
   \ifdim\wd30>\hsize
    \null\qquad$#3$\par\kern-\parskip\smallskip#4
   \else
    \null\qquad$#3$\qquad#4
   \fi
  \fi
 }%
} \makeatother
\newcommand{\block}[2]{\ensuremath{\left(#1 \ ; \ \left\lbrace  #2 \right\rbrace \right) }}
\newcommand{\kdet}[3]{\ensuremath{\mathit{det}\left(#1, #2 , #3 \right) }}
\newcommand{\skdet}[3]{\ensuremath{\mathit{det}_{s}\left(#1, #2 , #3 \right) }}
\title{Higher-Order Equational Pattern Anti-Unification [Preprint]\footnote{This research is supported by the FWF project P28789-N32.}}
\titlerunning{Higher-order Equational Anti-unification}
\author[1]{David M. Cerna}
\author[2]{Temur Kutsia }
\affil[1]{Research Institute for Symbolic Computation, Johannes Kepler University, Linz, Austria\\
  \texttt{david.cerna@risc.jku.at}}
\affil[2]{Research Institute for Symbolic Computation, Johannes Kepler University, Linz, Austria\\
  \texttt{temur.kutsia@risc.jku.at}}
\authorrunning{D. M. Cerna and T. Kutsia} 
\subjclass{F.4.2 [Theory of Computation]: Mathematical Logic and
Formal Languages---Grammars and Other Rewriting Systems, F.2.2 [Theory of Computation]:
Analysis of Algorithms and Problem Complexity---Nonnumerical Algorithms and Problems.
%Dummy classification -- please refer to \url{http://www.acm.org/about/class/ccs98-html}
}% mandatory: Please choose ACM 1998 classifications from http://www.acm.org/about/class/ccs98-html . E.g., cite as "F.1.1 Models of Computation". 
\keywords{Simply typed lambda calculus, anti-unification, equational theories}
\def\infrule#1#2#3{\@ifnextchar[{\@infrule{#1}{#2}{#3}}{\@infrule{#1}{#2}{#3}[*]}}%
\def\@infrule#1#2#3[#4]{
 \par\bigbreak
 \vtop{
  %\hangindent3em\hangafter2
  \leavevmode\null
  \textsf{#1:}\kern.5em\textbf{#2}\\[\smallskipamount]
  \ifx*#4 % no optional argument given
   \null\qquad$#3$
  \else % optional argument exists
   \setbox30=\hbox{\qquad$#3$\qquad#4}%
   \ifdim\wd30>\hsize
    \null\qquad$#3$\par\kern-\parskip\smallskip#4
   \else
    \null\qquad$#3$\qquad#4
   \fi
  \fi
 }%
} \makeatother
\begin{document}
\maketitle

\begin{abstract}
We consider anti-unification for simply typed lambda terms in associative, commutative, and associative-commutative theories and develop a sound and complete algorithm which takes two lambda terms and computes their generalizations in the form of higher-order patterns. The problem is finitary: the minimal complete set of generalizations contains finitely many elements. We define the notion of optimal solution and investigate special fragments of the problem for which the optimal solution can be computed in linear or polynomial time.
\end{abstract}

\section{Introduction}

Anti-unification algorithms aim at computing generalizations for given terms. A generalization of $t$ and $s$ is a term $r$ such that $t$ and $s$ are substitution instances of $r$. Interesting generalizations are those that are least general (lggs). However, it is not always possible to have a unique least general generalization. In these cases the task is either to compute a minimal complete set of generalizations, or to impose restrictions so that uniqueness is guaranteed.

Anti-unification, as considered in this paper, uses both of these ideas. The theory is simply-typed lambda calculus, where some function symbols may be associative, commutative, or associative-commutative. A,-, C-, and AC-anti-unification is finitary even for first-order terms, and a modular algorithm has been proposed in~\cite{DBLP:journals/iandc/AlpuenteEEM14} to compute the corresponding minimal complete set of generalizations. Anti-unification for simply typed lambda terms can be restricted to compute generalizations in the form of Miller's patterns~\cite{DBLP:journals/logcom/Miller91}, which makes it unitary, and the single least general generalization can be computed in linear time by the algorithm proposed in~\cite{DBLP:journals/jar/BaumgartnerKLV17}. These two approaches combine nicely with each other when one wants to develop a higher-order equational anti-unification algorithm, and we illustrate it in this paper. Basically, it extends the syntactic\footnote{We refer to the higher-order anti-unification algorithm from \cite{DBLP:journals/jar/BaumgartnerKLV17} as syntactic, although it works modulo $\beta\eta$-conversion.} generalization rules from \cite{DBLP:journals/jar/BaumgartnerKLV17} by equational decomposition rules inspired by the those from \cite{DBLP:journals/iandc/AlpuenteEEM14}, getting a modular algorithm in which different equational axioms for different function symbols can be combined automatically. The algorithm takes a pair of simply typed lambda terms and returns a set of their generalizations in the form of higher-order patterns. It is terminating, sound, and complete. However, the number of nondeterministic choices when decompositg may result in a large search tree. Although each branch can be developed in linear time, there can be too many of them to search efficiently.

This is the problem that we address in the second part of the paper. The idea is to use a greedy approach: introduce an optimality criterion, use it to select an anti-unification problem among different alternatives obtained by a decomposition rule, and try to solve only that. In this way, we would only compute one generalization. Checking the criterion and selecting the right branch should be done ``reasonably fast''. To implement this idea, we introduce conditions on the form of anti-unification problems which are guarantee to compute ``optimal'' solutions, and study the corresponding complexities. In particular, we identify conditions for which A-, C-, and AC-generalizations can be computed in linear time. We also study how the complexity changes by relaxing these conditions.

%\subsection{Related Work}

Higher-order anti-unification has been investigated by various authors from different application 
perspective. Research  has been focused mainly on the investigation of special classes for which 
the uniqueness of lgg is guaranteed. Some application areas include proof generalization~
\cite{DBLP:conf/lics/Pfenning91}, higher-order term indexing~\cite{DBLP:journals/tocl/Pientka09}, 
cognitive modeling and analogical reasoning~\cite{DBLP:journals/connection/BesoldKP17,DBLP:series/sci/SchmidtKGK14}, recursion scheme detection in functional programs~\cite{DBLP:journals/fgcs/BarwellBH18}, inductive synthesis of recursive functions~\cite{DBLP:books/sp/Schmid03}, just 
to name a few. Two higher-order anti-unification algorithms~\cite{DBLP:journals/iandc/BaumgartnerK17,DBLP:journals/jar/BaumgartnerKLV17} are included in an online open-source anti-
unification library~\cite{baumgartner15,DBLP:conf/jelia/BaumgartnerK14}. This related work does 
not consider anti-unification with higher-order terms in the presence of equational axioms. 
However, such a combination can be useful, for instance, for developing indexing techniques for 
higher-order theorem provers~\cite{DBLP:conf/cade/LibalS16} or in higher order program manipulation 
tools. 

The organization of the paper is as follows: In Section~\ref{sect:ho:patterns} we introduce the main notions and define the problem. In Section~\ref{sect:hopau} we recall the higher-order anti-unification algorithm from \cite{DBLP:journals/jar/BaumgartnerKLV17}. In Section~\ref{sect:genAC} we extend the algorithm with equational decomposition rules. Section~\ref{sec:special:fragments} is devoted to the introduction of computationally well-behaved fragments of anti-unification problems. The next sections describe the behavior of equational anti-unification algorithms on these fragments: In Section~\ref{sect:special:A} we discuss associative generalization and speak about optimality. Sections~\ref{sect:special:C} and~\ref{sect:special:AC} are about C- and AC-generalizations. 

\section{Preliminaries}
\label{sect:ho:patterns}

This work builds upon the to be introduced background theory and the  results of \cite{DBLP:conf/rta/BaumgartnerKLV13,DBLP:journals/jar/BaumgartnerKLV17}. Higher-order signatures are composed of \emph{types} constructed from a set
of \emph{base types} (typically $\delta$) using the grammar $\tau ::= \delta \mid \tau \rightarrow \tau$. We will consider 
$\rightarrow$ to  associate right unless otherwise stated. \emph{Variables} (typically
$X, Y, Z, x, y, z, a, b, \ldots$) as well as \emph{constants} (typically $f,
c, \ldots$) are assigned types from the set of types constructed using the above grammar. \emph{$\lambda$-terms} (typically $t,s,u,\ldots$) are constructed using the grammar  $t ::= x \mid c \mid \lambda x.t \mid t_1\ t_2$
where $x$ is a variable and $c$ is a constant, and are typed using the type construction mentioned above. Terms of the form $(\ldots (h\ t_1 ) \ldots t_m )$, where $h$
is a constant or a variable, will be written as $h(t_1 , \ldots , t_m
)$, and terms of the form $\lambda x_1 . \ldots .\lambda x_n .t$ as
$\lambda x_1 , \ldots , x_n .t$. We use $\vect{x}$ as a short-hand for
$x_1 , \ldots , x_n$. This basic language will be extended by higher-order constants satisfying equational axioms. When necessary, we write a $\lambda$-term $t$ together with its type $\alpha$ as $t:\alpha$.

Every higher-order constant $c$ will have an associated set of axioms, denoted by $\mathit{Ax}(c)$. If $\mathit{Ax}(c)$ is empty then $c$ does not have any associated properties and is called \emph{free}. Otherwise, $\mathit{Ax}(f)\subseteq \lbrace {\sf A},{\sf C}\rbrace$ where $\sf A$ is associativity, i.e. $f(a,f(b,c)) \equiv f(f(a,b),c)$ and $\sf C$ is commutativity, i.e. $f(a,b) \equiv f(b,a)$. Note that only functions of the type $\alpha\rightarrow \alpha\rightarrow \alpha$ are allowed to have equational properties. We assume that terms are written in \emph{flattened form}, obtained by replacing all subterms of the form$f(t_1,\ldots,f(s_1,\ldots,s_m),\ldots t_n)$ by $f(t_1,\ldots,s_1,\ldots,s_m,\ldots t_n)$, where ${\sf A}\in Ax(f)$. Also, by convention, the term $f(t)$ stands for $t$, if ${\sf A}\in Ax(f)$.
Other standard notions of the simply typed $\lambda$-calculus, like
bound and free occurrences of variables, $\alpha$-conversion,
$\beta$-reduction, $\eta$-long $\beta$-normal form, etc. are defined
as usual (see~\cite{barendregt84,DBLP:books/el/RV01/Dowek01}). By default, terms
are assumed to be written in $\eta$-long $\beta$-normal
form. Therefore, all terms have the form $\lambda x_1, \ldots ,
x_n .h(t_1 , \ldots , t_m )$, where $n, m \geq 0$, $h$ is either a
constant or a variable, $t_1 , \ldots , t_m$ have this form, and
the term $h(t_1 , \ldots , t_m )$ has a basic type.

The set of free variables of a term $t$ is denoted by $\vars(t)$. When
we write an equality between two $\lambda$-terms, we mean that they
are equivalent modulo $\alpha$, $\beta$ and $\eta$ equivalence.

The \emph{size} of a term $t$, denoted $\size{t}$, is defined recursively as %follows:
%$\size{x} = \size{c} = 1$,
$\size{h(t_1,\ldots,t_n)}= 1+ \sum_{i=1}^n\size{t_i}$ and
$\size{\lambda x.t} = 1 + \size{t}$. The \emph{depth} of a term $t$, denoted $\depth(t)$ is defined
recursively as %follows: %$\depth(x) = \depth(c) = 1$,
$\depth(h(t_1,\ldots,t_n))= 1+ \max_{i\in\{1,\dots,n\}} \depth(t_i)$ and
$\depth(\lambda x.t) = 1 + \depth(t)$. For a term $t= \lambda x_1,\ldots,x_n. h(t_1,\ldots,t_m)$ with $n,m\ge
0$, its \emph{head} is defined as $\head(t)=\nobreak h$.

A \emph{higher-order pattern} is a $\lambda$-term where, when written
in $\eta$-long $\beta$-normal form, all free variable occurrences are
applied to lists of pairwise distinct ($\eta$-long forms of) bound
variables. For instance, $\lambda x. f (X (x), Y )$, $f (c, \lambda
x.x)$ and $\lambda x.\lambda y.X (\lambda z.x(z), y)$ are patterns,
while $\lambda x. f (X (X (x)), Y )$, $f (X (c), c)$ and $\lambda
x.\lambda y.X (x, x)$ are not.

\emph{Substitutions} are finite sets of pairs $\{X_1\mapsto
t_1,\ldots, X_n\mapsto t_n\}$ where $X_i$ and $t_i$ have the same type
and the $X$'s are pairwise distinct variables. They can be extended to
type preserving functions from terms to terms as usual, avoiding
variable capture. The notions of substitution \emph{domain} and
\emph{range} are also standard and are denoted, respectively, by
$\dom$ and $\ran$.

We use postfix notation for substitution applications, writing
$t\sigma$ instead of $\sigma(t)$. As usual, the application $t\sigma$
affects only the free occurrences of variables from $\dom(\sigma)$ in
$t$. We write $\vect{x}\sigma$ for $x_1\sigma,\ldots,x_n\sigma$, if
$\vect{x}=x_1,\ldots,x_n$. Similarly, for a set of terms $S$, we
define $S\sigma = \{t\sigma \mid t\in S\}$.  The \emph{composition} of
$\sigma$ and $\vartheta$
 is written as juxtaposition
$\sigma\vartheta$ and is defined as $x(\sigma\vartheta) = (x\sigma)\vartheta$ for all $x$. 
Another standard operation, \emph{restriction}
of a substitution $\sigma$ to a set of variables $S$, is denoted by
$\sigma|_S$.

A substitution $\sigma_1$ is \emph{more general} than $\sigma_2$,
written $\sigma_1\preceq \sigma_2$, if there exists $\vartheta$ such
that $X\sigma_1\vartheta = X\sigma_2$ for all $X \in
\dom(\sigma_1)\cup \dom(\sigma_2)$. The strict part of this relation
is denoted by $\prec$. The relation $\preceq$ is a partial order and
generates the equivalence relation which we denote by $\simeq$.  We
overload $\preceq$ by defining $s\preceq t$ if there exists a
substitution $\sigma$ such that $s\sigma = t$. The focus of this work is generalization 
in the presence of equational axioms thus we need a more general concept of ordering of 
substitutions/terms by their generality. We say that two terms $s,t$ are  $s =_{\mathcal{E}} t$ if they are equivalent modulo $\mathcal{E}\subseteq \{{\sf A},{\sf C}\}$. For example, $f(a,f(b,c)) \neq f(f(a,b),c)$ but, $f(a,f(b,c)) =_{\lbrace A \rbrace} f(f(a,b),c)$. Under this notion of equality we can say that a substitution $\sigma_1$ is \emph{more general modulo an equational theory  $\mathcal{E}\subseteq \{{\sf A},{\sf C}\}$} than $\sigma_2$ written $\sigma_1\preceq_{\mathcal{E}} \sigma_2$ if there exists $\vartheta$ such
that $X\sigma_1\vartheta =_{\mathcal{E}} X\sigma_2$ for all $X \in
\dom(\sigma_1)\cup \dom(\sigma_2)$ Note that $\prec$ and $\simeq$ and the term extension are generalized accordingly. Form this point on we will use the ordering relation modulo an equational theory when discussing generalization.

A term $t$ is called a \emph{generalization} or an
\emph{anti-instance} modulo an equational theory $\mathcal{E}$ of two terms $t_1$ and $t_2$ if $t\preceq_{\mathcal{E}} t_1$
and $t\preceq_{\mathcal{E}}  t_2$. It is a \emph{higher-order pattern generalization}
if additionally $t$ is a higher-order pattern. It is the \emph{least
  general generalization} (lgg in short), aka a \emph{most specific
  anti-instance,} of $t_1$ and $t_2$, if there is no generalization
$s$ of $t_1$ and $t_2$ which satisfies $t\prec_{\mathcal{E}}  s$. 
 
An \emph{anti-unification problem} (shortly AUP) is a triple
$X(\vect{x}): t \triangleq s$ where
\begin{itemize}
  \item $\lambda \vect{x}.X(\vect{x})$, $\lambda \vect{x}. t$, and
    $\lambda \vect{x}. s$ are terms of the same type,
  \item $t$ and $s$ are in $\eta$-long $\beta$-normal form, and
  \item $X$ does not occur in $t$ and $s$. 
\end{itemize}
 The variable $X$ is
called a \emph{generalization variable.}  The term
$X(\vect{x})$ is called the \emph{generalization term.}  The variables
that belong to $\vect{x}$, as well as bound variables, are written in
the lower case letters $x,y,z,\ldots$. Originally free variables,
including the generalization variables, are written with the capital
letters $X,Y,Z,\ldots$. This notation intuitively corresponds to the
usual convention about syntactically distinguishing bound and free
variables. The size of a set of AUPs is defined as
$\size{\{X_1(\vect{x_1}):t_1 \triangleq s_1,\dots,X_n(\vect{x_n}):t_n
  \triangleq s_n\}} = \sum_{i=1}^n \size{t_i} + \size{s_i}$. Notice 
that the size of $X_i(\vect{x_i})$ is not considered.
An \emph{anti-unifier} of an AUP $X(\vect{x}) :t\triangleq s$ is a
substitution $\sigma$ such that $\dom(\sigma) = \{X\}$ and $\lambda
\vect{x}.X(\vect{x})\sigma$ is a term which
generalizes both $\lambda \vect{x}.t$ and $\lambda \vect{x}.s$.

An anti-unifier of $X(\vect{x}):t\triangleq s$ is \emph{least general}
(or \emph{most specific}) modulo an equational theory $\mathcal{E}$ if there is no anti-unifier $\vartheta$ of
the same problem that satisfies $\sigma\prec_{\mathcal{E}}  \vartheta$. Obviously, if
$\sigma$ is a least general anti-unifier of an AUP
$X(\vect{x}):t\triangleq s$, then $\lambda \vect{x}.X(\vect{x})\sigma$
is a lgg of $\lambda \vect{x}.t$ and $\lambda
\vect{x}.s$.

Here we consider a variant of higher-order equational anti-unification problem: 
\begin{description}
  \item[Given:] Higher-order terms $t$ and $s$ of the same type in $\eta$-long
    $\beta$-normal form and an equational theory $\mathcal{E}\subseteq \{{\sf A},{\sf C}\}$.
  \item[Find:] A higher-order pattern generalization $r$ of $t$ and $s$ modulo $\mathcal{E}\subseteq \{{\sf A},{\sf C}\}$.
\end{description}

Essentially, we are looking for $r$ which is least
general among all higher-order patterns which generalize $t$ and
$s$ (modulo $\mathcal{E}$). There can still exist a term which is less general than $r$,
generalizes both $s$ and $t$, but is not a higher-order pattern. In \cite{DBLP:journals/jar/BaumgartnerKLV17} there is an instance for syntactic anti-unification: if $t=\lambda x,y. f(h(x,x,y),h(x,y,y))$ and $s=\lambda
x,y. f(g(x,x,y),$ $g(x,y,y))$, then $r=\lambda
x,y. f(Y_1(x,y),Y_2(x,y))$ is a higher-order pattern, which is an lgg
of $t$ and $s$. However, the term $\lambda x,y. f(Z(x,x,y),Z(x,y,y))$,
which is not a higher-order pattern, is less general than $r$ and
generalizes $t$ and $s$.

Another important distinguishing feature of higher-order pattern generalization modulo $\mathcal{E}$ is that there may be more than one least general pattern generalization (lgpg) for a given pair of terms. In the syntactic case there is a unique lgpg. The main contribution of this paper is to find conditions on the AUPs under which there is a unique lgpg for equational cases, and introduce weaker-optimality conditions which allow one to greedily search the space for a less general generalization compared to the syntactic one. We formalize these concepts in the following sections. 

\section{Higher Order Pattern Generalization in the Empty Theory}
\label{sect:hopau}

Below we assume that in the AUPs of the form $X(\vect{x}) : t \triangleq
s$ and the term $\lambda \vect{x}.X(\vect{x})$ is a higher-order pattern.  We now introduce the rules for the higher-order pattern generalization algorithm from~\cite{DBLP:journals/jar/BaumgartnerKLV17}, which works for $\mathcal{E}=\emptyset$. It produces syntactic higher-order pattern generalizations in linear time and will play a key role in our optimality conditions introduced in later sections. 

These rules work on triples $A;S;\sigma$, which are called \emph{states}. Here $A$ is a set of AUPs of the form $\{X_1(\vect{x_1}) : t_1 \triangleq s_1, \ldots, X_n(\vect{x_n}) : t_n \triangleq s_n\}$ that are pending to anti-unify, $S$ is a set of already solved AUPs (the \emph{store}), and $\sigma$ is a substitution (computed so far) mapping variables to patterns. The symbol $\uplus$ denotes disjoint union.
\infrule{Dec}{Decomposition}
    {\{X(\vect{x}) : h(t_1,\ldots,t_m) \triangleq h(s_1,\ldots,s_m)\}\uplus A ;\;  S ;\;  \sigma \Lra \\  
     \hspace*{3em} \{Y_1(\vect{x}): t_1\triangleq s_1,\ldots, Y_m(\vect{x}): t_m\triangleq s_m\}\cup A ;\;  S ;\;  
     \sigma\{X\mapsto \lambda \vect{x}.h(Y_1(\vect{x}),\ldots,Y_m(\vect{x}))\}, } %\vspace*{0.5em}
[\noindent where $h$ is a free constant or $h\in \vect{x}$, and $Y_1,\ldots,Y_n$ are fresh variables of the appropriate types.]

\infrule{Abs}{Abstraction Rule}
    {\{\lbrace X(\vect{x}):\lambda y.t \triangleq \lambda z.s\rbrace\}\uplus A ;\;  S ;\;  \sigma \Lra \\ \hspace*{3em} \lbrace X'(\vect{x},y):t \triangleq s\lbrace z \mapsto y \rbrace \rbrace  \cup A ;\;   S ;\;  
     \sigma\left\lbrace X\mapsto  \lambda\vect{x},y.X'(\vect{x},y) \right\rbrace, }
%     \vspace*{0.5em}
[\noindent where $X'$  is a fresh variable of the appropriate type.]

\infrule{Sol}{Solve Rule}
    {\{X(\vect{x}):t \triangleq s\}\uplus A ;\;  S ;\;  \sigma \Lra  A ;\; \{Y(\vect{y}):t \triangleq s \}\cup S ;\;  
     \sigma\left\lbrace X\mapsto  \lambda\vect{x}.Y(\vect{y}) \right\rbrace}
   %  \vspace*{0.5em}
[\noindent ce*{0.5em}
[\noindent where t and s are of a basic type, $\head(t) \neq \head(s)$ or $\head(t)  = \head(s) = Z \not \in \vect{x}$. The sequence $\vect{y}$ is a subsequence of $\vect{x}$ consisting of the variables that appear freely in $t$ or in $s$, and $Y$ is a fresh variable of the appropriate type.] 

\infrule{Mer}{Merge Rule}
    { A ;\; \lbrace X(\vect{x}):t_1\triangleq t_2 , Y(\vect{y}):s_1\triangleq s_2 \rbrace\uplus  S ;\;  \sigma \Lra \\ \hspace*{3em}   A ;\; \lbrace X(\vect{x}):t_1\triangleq t_2 \rbrace \cup S ;\;  
     \sigma\left\lbrace Y\mapsto  \lambda\vect{y}.X(\vect{x}\pi) \right\rbrace }
[\noindent Where $\pi:\lbrace\vect{x}\rbrace \rightarrow \lbrace\vect{y}\rbrace$ is a bijection, extended as a substitution with $t_{1}\pi = s_1$ and $t_2\pi = s_{2}$. Note that in the case of the equational theory we will consider later we would use $=_{\mathcal{E}}$ instead of $=$.]\vspace{0.3cm}

We will refer to these generalization rules as $\mathcal{G}_{base}$. To compute generalizations for two simply typed lambda-terms in $\eta$-long $\beta$-normal form $t$ and $s$, the algorithm from \cite{DBLP:journals/jar/BaumgartnerKLV17} starts with the
\emph{initial state} $\{X : t \triangleq s\}; \emptyset; \ids$, where
$X$ is a fresh variable, and applies these rules as long as possible. The computed result  is the instance of $X$ under the final substitution. It is the syntactic least general higher-order pattern generalization of $t$ and $s$, and is computed in linear time in the size of the input.

We will use this linear time procedure in the following section to obtain ``optimal'' least general higher-order pattern generalizations of terms modulo an equation theory. These optimal generalizations are dependent on the generalizations the syntactic algorithm produces. When we need to check more than one decomposition of a given AUP in order to compute the optimal generalizations modulo an equational theory, we  compute the optimal generalization for each decomposition path and than compare the results. The details are explained below.

We assume that terms are written in \emph{flattened form}, obtained by replacing all subterms of the form$f(t_1,\ldots,f(s_1,\ldots,s_m),\ldots t_n)$ by $f(t_1,\ldots,s_1,\ldots,s_m,\ldots t_n)$, where ${\sf A}\in Ax(f)$. Also, by convention, the term $f(t)$ stands for $t$, if ${\sf A}\in Ax(f)$.

\section{Equational Decomposition Rules}
\label{sect:genAC}

In this section we discuss an extension of the basic rules concerning higher-order pattern generalization by decomposition rules for A, C, and AC function symbols. Here, we consider the general, unrestricted case. Efficient special fragments are discussed in the subsequent section.

We start from decomposition rules for associative generalization: 

\infrule{Dec-A-L}{Associative Decomposition Left}
    {\{X(\vect{x}) : f(t_1,\ldots,t_n) \triangleq f(s_1,\ldots,s_m)\}\uplus A ;\;  S ;\;  \sigma \Lra \\  
     \hspace*{3em} \{Y_1(\vect{x}): f(t_1,\ldots, t_k )\triangleq s_1 ,\; Y_2(\vect{x}): f(t_{k+1},\ldots, t_m)\triangleq f(s_2,\ldots,s_m)\}\cup A ;\;  \\ \hspace*{3em} S ;\;  
     \sigma\{X\mapsto \lambda \vect{x}.f(Y_1(\vect{x}),Y_2(\vect{x}))\},}
     [\noindent where $\mathit{Ax}(f) = \lbrace {\sf A}\rbrace$, $1\le k\le n-1$, $n,m\geq 2$, and $Y_1$ and $Y_2$ are fresh variables of appropriate types.]
\infrule{Dec-A-R}{Associative Decomposition Right}
    {\{X(\vect{x}) : f(t_1,\ldots,t_n) \triangleq f(s_1,\ldots,s_m)\}\uplus A ;\;  S ;\;  \sigma \Lra \\  
     \hspace*{3em} \{Y_1(\vect{x}): t_1\triangleq f(s_1,\ldots, s_k ) , Y_2(\vect{x}): f(t_2,\ldots, t_n )\triangleq f(s_{k+1},\ldots, s_m)\}\cup A ;\;  \\ \hspace*{3em} S ;\;  
     \sigma\{X\mapsto \lambda \vect{x}.f(Y_1(\vect{x}),Y_2(\vect{x}))\},}
[\noindent where  $\mathit{Ax}(f) = \lbrace {\sf A}\rbrace$, $1\le k\le m-1$, $n,m\ge 2$, and $Y_1$ and $Y_2$ are fresh variables of appropriate types. ]\vspace*{1em}

We refer to the extension of $\mathcal{G}_{base}$ by the above associativity rules as $\mathcal{G}_A$ and extend the termination, soundness and completeness results for  $\mathcal{G}_{base}$ to $\mathcal{G}_A$. 
\begin{theorem}[Termination]
The set of transformations $\mathcal{G}_A$ is terminating.
\end{theorem}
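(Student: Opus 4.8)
The plan is to exhibit a measure on states that maps into a well-founded order and strictly decreases under every rule of $\mathcal{G}_A$. Since $\mathcal{G}_A$ adds only the two associative decomposition rules to $\mathcal{G}_{base}$, the natural first attempt is to reuse the plain size $\size{\cdot}$ of the pending set $A$ as the principal component. This works immediately for \textsf{Dec} and \textsf{Abs} (each strips one symbol from both sides and hence lowers $\size{A}$ by $2$), for \textsf{Sol} (it deletes an AUP from $A$, whose two terms have size at least $1$ each), and for \textsf{Mer} (it leaves $A$ untouched but removes one equation from the store $S$). The obstacle is precisely the associative rules: applying \textsf{Dec-A-L} to $f(t_1,\ldots,t_n)\triangleq f(s_1,\ldots,s_m)$ replaces two flattened applications by \emph{three} ($f(t_1,\ldots,t_k)$, $f(t_{k+1},\ldots,t_n)$ and $f(s_2,\ldots,s_m)$, with $s_1$ left bare), so the raw size goes \emph{up} by one.

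To repair this I would replace $\size{\cdot}$ by a flattening-aware \emph{expanded size} $\extsize{\cdot}$, which charges an associative application of arity $n$ the cost $n-1$ rather than $1$ --- equivalently, the number of binary $f$-nodes needed to rebuild the flattened term. Formally, set $\extsize{\lambda x.t}=1+\extsize{t}$, $\extsize{h(t_1,\ldots,t_n)}=(n-1)+\sum_{i=1}^n\extsize{t_i}$ when ${\sf A}\in\mathit{Ax}(h)$, and $\extsize{h(t_1,\ldots,t_n)}=1+\sum_{i=1}^n\extsize{t_i}$ otherwise (so a variable or constant has expanded size $1$, consistent with the convention $f(t)=t$ in the arity-one case). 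The principal component of the measure is then $\Phi_1(A;S;\sigma)=\sum_{(X(\vect{x}):t\triangleq s)\in A}\bigl(\extsize{t}+\extsize{s}\bigr)$ and the secondary component $\Phi_2$ is the number of AUPs in $S$; states are compared by the lexicographic order on $(\Phi_1,\Phi_2)\in\mathbb{N}\times\mathbb{N}$, which is well founded.

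The key step is a direct calculation that \textsf{Dec-A-L} (and symmetrically \textsf{Dec-A-R}) strictly decreases $\Phi_1$. Writing $\Sigma=\sum_{i=1}^n\extsize{t_i}+\sum_{j=1}^m\extsize{s_j}$, the removed AUP contributes $\extsize{f(t_1,\ldots,t_n)}+\extsize{f(s_1,\ldots,s_m)}=(n-1)+(m-1)+\Sigma$, whereas the two new AUPs contribute $(k-1)+(n-k-1)+(m-2)+\Sigma$; the difference is exactly $2$, and one checks that the arity-one convention makes the bookkeeping go through in the boundary cases $k=1$, $n-k=1$ and $m=2$. Since the associative rules do not touch $S$, they lower the lexicographic measure.

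It remains to recheck the base rules against $\extsize{\cdot}$: \textsf{Dec} and \textsf{Abs} still drop $\Phi_1$ by $2$ (the head $h$ in \textsf{Dec} is free or bound, so it carries the non-associative cost $1$ on each side), \textsf{Sol} drops $\Phi_1$ because it removes an AUP of strictly positive expanded size from $A$, and \textsf{Mer} keeps $\Phi_1$ fixed while decreasing $\Phi_2$ by one. As every rule strictly decreases $(\Phi_1,\Phi_2)$ in a well-founded order, no infinite derivation is possible, giving termination of $\mathcal{G}_A$. The only genuinely delicate point is the design of $\extsize{\cdot}$ so that the ``extra'' associative head produced by a split is paid for in advance; once the cost $n-1$ is adopted, every case reduces to a two-line arithmetic check.
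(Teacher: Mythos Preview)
Your argument is correct, and it takes a genuinely different route from the paper. The paper's proof is a two-sentence sketch: it appeals to the known termination of $\mathcal{G}_{base}$ and asserts that \textsf{Dec-A-L} and \textsf{Dec-A-R} can be applied only finitely many times. That is a modular argument, but it leaves implicit the interaction lemma one actually needs, namely that the base rules never increase the ``budget'' of remaining associative decompositions, so that interleaving finitely many associative steps with terminating base segments still terminates globally.

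Your approach instead builds a single uniform measure: the lexicographic pair $(\Phi_1,\Phi_2)$ with $\Phi_1$ computed via the flattening-aware expanded size $\extsize{\cdot}$ that charges an $n$-ary associative head $n-1$ rather than $1$. The point of that charge is exactly to prepay for the extra $f$-node produced by an associative split, and your arithmetic (drop of $2$ for each associative decomposition, with the arity-one convention covering the boundary cases $k=1$, $n-k=1$, $m=2$) checks out. The base rules are then handled by the same measure without appealing to the cited result. What the paper's version buys is brevity and reuse of prior work; what yours buys is a self-contained proof that does not depend on the black-box termination of $\mathcal{G}_{base}$ and that would scale more easily if further equational decomposition rules were added later.
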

\begin{proof}
Termination follows from the fact that $\mathcal{G}_{base}$ terminates~\cite{DBLP:journals/jar/BaumgartnerKLV17} and the rules {\sf Dec-A-L} and {\sf Dec-A-R} can be applied finitely many times. 
\end{proof}
\begin{theorem}[Soundness]
 If $\{X: t\triangleq s\};\emptyset;\ids \Lra^* \emptyset; S;\sigma$
   is a transformation sequence of $\mathcal{G}_A$, then $X\sigma$ is a higher-order pattern in $\eta$-long $\beta$-normal form and $X\sigma \preceq t$ and $X\sigma \preceq s$.
\end{theorem}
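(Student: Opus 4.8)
The plan is to prove soundness by induction on the length of the transformation sequence $\{X: t\triangleq s\};\emptyset;\ids \Lra^* \emptyset; S;\sigma$. The key is to strengthen the statement to an invariant that is preserved by every rule application, since the bare conclusion about the initial and final states is not directly inductive. The invariant I would maintain is as follows: at every intermediate state $A; S; \theta$, whenever we apply the remaining rules to completion and obtain a final substitution $\sigma$, the composed substitution respects the generalization property for each AUP, and $X\sigma$ is a well-typed higher-order pattern in $\eta$-long $\beta$-normal form. More precisely, I would state the invariant as: for each AUP $Z(\vect{y}): u \triangleq v$ currently in $A \cup S$, the term $\lambda\vect{y}.Z(\vect{y})\sigma$ generalizes both $\lambda\vect{y}.u$ and $\lambda\vect{y}.v$, and $\theta$ maps generalization variables to patterns; then $X\sigma \preceq t$ and $X\sigma \preceq s$ follows by tracing the fresh variables introduced back to $X$.

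First I would establish the base case: since $\mathcal{G}_{base}$ is already known to be sound by~\cite{DBLP:journals/jar/BaumgartnerKLV17}, I only need to verify that each of the two new rules $\sf Dec\text{-}A\text{-}L$ and $\sf Dec\text{-}A\text{-}R$ preserves the invariant. So the real content is the inductive step restricted to these two rules. For $\sf Dec\text{-}A\text{-}L$, I would argue that if anti-unifiers $\vartheta_1, \vartheta_2$ generalize the two produced subproblems $Y_1(\vect{x}): f(t_1,\ldots,t_k)\triangleq s_1$ and $Y_2(\vect{x}): f(t_{k+1},\ldots,t_n)\triangleq f(s_2,\ldots,s_m)$ respectively, then the substitution $\{X\mapsto \lambda\vect{x}.f(Y_1(\vect{x}),Y_2(\vect{x}))\}$ composed with them yields a generalization of the original $f(t_1,\ldots,t_n)\triangleq f(s_1,\ldots,s_m)$. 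The crucial point here is that this equality only holds \emph{modulo associativity}: on the left-hand side, $f(f(t_1,\ldots,t_k), f(t_{k+1},\ldots,t_n))$ flattens to $f(t_1,\ldots,t_n)$ precisely because $\mathsf{A}\in\mathit{Ax}(f)$, and similarly on the right the reconstruction $f(s_1, f(s_2,\ldots,s_m))$ matches $f(s_1,\ldots,s_m)$. I would invoke the convention that $f(t)$ stands for $t$ to handle the boundary subcases where a block reduces to a single argument, and the flattened-form convention to justify the reassociation. The argument for $\sf Dec\text{-}A\text{-}R$ is symmetric.

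I would then verify the pattern and typing conditions: the head $f$ has type $\alpha\to\alpha\to\alpha$, so $\lambda\vect{x}.f(Y_1(\vect{x}),Y_2(\vect{x}))$ is well-typed, and since $Y_1,Y_2$ are fresh and applied to the same distinct bound-variable list $\vect{x}$ as $X$, the pattern property is preserved. Here I would use that the original generalization term $\lambda\vect{x}.X(\vect{x})$ was a pattern, so $\vect{x}$ consists of pairwise distinct bound variables, and this property is inherited by the fresh variables. The $\eta$-long $\beta$-normal form of the substituted terms follows because no redexes are created by applying a variable to a list of distinct bound variables. Finally, composing substitutions preserves the generalization relation $\preceq$ transitively down the derivation, yielding $X\sigma \preceq t$ and $X\sigma \preceq s$ at the terminal state.

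I expect the main obstacle to be making the modulo-associativity reasoning fully rigorous at the boundary cases of the index $k$, together with the interaction between flattening and the convention $f(t)=t$. When $k=1$ the left block $f(t_1,\ldots,t_k)$ degenerates to $t_1$, and when $k=n-1$ the right block in $\sf Dec\text{-}A\text{-}L$ becomes $f(t_n)=t_n$; checking that the flattened reconstruction still equals the original term in all these degenerate situations, and that the produced AUPs remain well-formed (for instance that the resulting terms stay in $\eta$-long $\beta$-normal form and the side conditions $n,m\ge 2$ guarantee the splits are nonempty), is where the bookkeeping is most delicate. The purely syntactic parts — typing, freshness, the pattern condition — are routine given the corresponding facts for $\mathcal{G}_{base}$.
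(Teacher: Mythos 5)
Your proof is correct, but it takes a genuinely different route from the paper's. You induct on the length of the transformation sequence, strengthening the claim to a per-state invariant (each generalization variable's instance under the final substitution generalizes both sides of its AUP) and then checking that the two new rules {\sf Dec-A-L} and {\sf Dec-A-R} preserve it; this is essentially a replay of the soundness argument for $\mathcal{G}_{base}$ from \cite{DBLP:journals/jar/BaumgartnerKLV17} with two extra rule cases. The paper instead inducts on the maximal number of arguments of associative function symbols in the AUP: the base case is when every associative occurrence is binary, where {\sf Dec-A-L}/{\sf Dec-A-R} coincide with {\sf Dec} (via the convention $f(t)=t$), and the step observes that a single application of either rule yields AUPs in which all associative occurrences have at most $n$ arguments, so the induction hypothesis applies, with a further outer induction over the number of occurrences of arity $n+1$. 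Your route is more self-contained and is more explicit about two points the paper glosses over: that the reconstructed term $f(f(t_1,\ldots,t_k),f(t_{k+1},\ldots,t_n))$ equals the original only after flattening, so the conclusion must be read as $X\sigma \preceq_{\{{\sf A}\}} t$ rather than syntactic $\preceq$, and that the degenerate splits ($k=1$, $k=n-1$) need the $f(t)=t$ convention. The paper's route buys brevity by reducing everything to the already-proved binary case instead of re-establishing the invariant machinery; yours buys a uniform template that would extend without change to the C and AC rules, which the paper only claims ``are easily extended.''
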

\begin{proof}
It was shown in~\cite{DBLP:journals/jar/BaumgartnerKLV17} that $\mathcal{G}_{base}$ is sound. Let us assume as a base case that all occurrences of associative function symbols in $t\triangleq s$ have two arguments. Then the rules {\sf Dec-A-L} and {\sf Dec-A-R} are equivalent to the $Dec$ rule. As an induction hypothesis (IH), assume soundness holds when  all occurrences of associative function symbols in $t\triangleq s$ have  $\leq n$ arguments. We show that it holds for $n+1$. Let $t\triangleq s$ be of the form $f(t_{1},\ldots, t_m) \triangleq f(s_{1},\ldots, s_k)$ for $\max\{k,m\}\leq (n+1)$ and let associative function symbols occurring in $t_{1},\ldots t_m,s_{1},\ldots s_k$ have at most $n$ arguments. Any application of {\sf Dec-A-L} or {\sf Dec-A-R} will produce two AUPs  for which the IH holds, and thus, the theorem holds. We can extend this argument to an arbitrary number of associative function symbols with $n+1$ arguments with another induction.  
\end{proof}

\begin{theorem}[Completeness]
   Let $\lambda\vect{x}.t_1$ and $\lambda\vect{x}.t_2$ be higher-order
   terms and $\lambda\vect{x}.s$ be a higher-order pattern such that
   $\lambda\vect{x}.s$ is a generalization of both
   $\lambda\vect{x}.t_1$ and $\lambda\vect{x}.t_2$ modulo associativity. Then there exists
   a transformation sequence $\{X(\vect{x}): t_1\triangleq
   t_2\};\emptyset;\ids \Lra^* \emptyset; S;\sigma$ in $\mathcal{G}_{A}$ such
   that $\lambda\vect{x}.s \preceq X\sigma$.
\end{theorem}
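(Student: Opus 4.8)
The plan is to prove completeness by induction on the structure of the generalization $\lambda\vect{x}.s$, showing that at each step the algorithm $\mathcal{G}_A$ can make a nondeterministic choice that ``tracks'' the syntax of $s$. First I would set up the invariant: for a state $A;S;\sigma$ reachable from the initial state, and for a pattern $s$ generalizing the current pair modulo associativity, there is a continuation of the derivation whose final substitution $\sigma'$ satisfies $\lambda\vect{x}.s \preceq_{\{{\sf A}\}} X\sigma'$. The base cases correspond to $s$ being a lone generalization variable (handled by the $\mathsf{Sol}$ rule) or $s$ having a free constant or bound-variable head that matches both sides syntactically (handled by $\mathsf{Dec}$), while the abstraction case is handled by $\mathsf{Abs}$; these are exactly the cases already covered by the completeness proof for $\mathcal{G}_{base}$ in \cite{DBLP:journals/jar/BaumgartnerKLV17}, which I would invoke directly.

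The essential new case is when the head of $s$ is an associative function symbol $f$, so that $s = f(s_1,\ldots,s_\ell)$ generalizes $f(t_1,\ldots,t_n) \triangleq f(s'_1,\ldots,s'_m)$ modulo $\{{\sf A}\}$. Here the key observation is that since $\lambda\vect{x}.s \preceq_{\{{\sf A}\}} \lambda\vect{x}.f(t_1,\ldots,t_n)$, there is a substitution witnessing that $s_1,\ldots,s_\ell$ generalize a partition of the flattened argument sequence $t_1,\ldots,t_n$ into $\ell$ contiguous blocks, and similarly for $s'_1,\ldots,s'_m$; flatness and the convention $f(t)=t$ guarantee that these partitions respect the associative structure. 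I would then argue that $\mathsf{Dec\text{-}A\text{-}L}$ (or $\mathsf{Dec\text{-}A\text{-}R}$), applied with the cut point $k$ chosen to align with the boundary of the first block of the partition on the respective side, produces two subproblems $Y_1(\vect{x}): \ldots$ and $Y_2(\vect{x}): \ldots$ for which $s_1$ generalizes the first and $f(s_2,\ldots,s_\ell)$ generalizes the second. Since each subproblem strictly decreases the relevant induction measure, the induction hypothesis applies to both, and composing the two resulting continuations with the substitution recorded by the decomposition rule yields the desired $\sigma'$.

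The main obstacle will be making the alignment argument rigorous: I must show that the existence of the associative generalization actually forces the arguments of $s$ to correspond to contiguous, nonempty blocks of both flattened argument lists, and that at least one of the two decomposition rules can realize any such block boundary within its index constraints ($1 \le k \le n-1$ and $1 \le k \le m-1$). The subtlety is that the first argument $s_1$ of $s$ may itself be headed by $f$ or may be a generalization variable, so the ``block'' it covers can have size greater than one on one side and size one (or be absorbed by the $f(t)=t$ convention) on the other; I would handle this by a careful case analysis on $\head(s_1)$, using the left rule when the split is cleaner on the left term and the right rule otherwise, and checking that the boundary conditions $n,m \ge 2$ needed to fire the rules hold whenever $s$ genuinely has an associative head with $\ell \ge 2$ arguments. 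A secondary technical point is choosing the induction measure so that it decreases under both rules simultaneously; the size $\size{\cdot}$ of the AUP set together with the number of arguments under $f$ should suffice, since each decomposition removes at least one argument from one of the two flattened lists.
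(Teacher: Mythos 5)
Your proposal is correct in outline but takes a genuinely different route from the paper. The paper proves completeness by induction on the maximal number of arguments of associative function symbols in the AUP: in the base case every occurrence is binary, so \textsf{Dec-A-L} and \textsf{Dec-A-R} coincide with the syntactic \textsf{Dec} rule and completeness is inherited from $\mathcal{G}_{base}$; the inductive step then just observes that for $n+1$ arguments the two associative rules enumerate all ways of splitting off a first block, hence all associative groupings are reachable. You instead induct on the structure of the target generalization $\lambda\vect{x}.s$ and simulate it, with the contiguous-block/alignment lemma as the key new ingredient. Your version is more work but also more informative: it makes explicit which nondeterministic choices realize a given $s$, whereas the paper leaves the correspondence between groupings and rule applications entirely implicit. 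One point your case analysis on $\head(s_1)$ does not quite reach: since $s$ is flattened, an argument $s_i$ with head other than $f$ always covers a length-one block on both sides, so the only troublesome case is $s_i$ a generalization variable whose instances are $f$-headed on \emph{both} sides, giving a block of length greater than one in both flattened argument lists; neither \textsf{Dec-A-L} nor \textsf{Dec-A-R} can pair two multi-element blocks in a single step, because each rule forces one side of the first new AUP to be a single argument. This is not fatal, because the theorem only demands $\lambda\vect{x}.s \preceq X\sigma$ rather than equality: you can let the derivation refine such a block pair into finer sub-AUPs, and the computed generalization remains an instance of $s$ by sending the variable to an $f$-term built from the fresh variables of the refinement. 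Stating that refinement step explicitly is the only real addition your plan needs.
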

\begin{proof}
We can reason similarly to the previous proof. It was shown in~\cite{DBLP:journals/jar/BaumgartnerKLV17} that $\mathcal{G}_{base}$ is complete. Let us assume as a base case that all occurrences of associative function symbols in $t\triangleq s$ have two arguments. Then the rules {\sf Dec-A-L} and {\sf Dec-A-R} are equivalent to the $Dec$ rule and completeness holds. When we have $n+1$ arguments there are $ n$ ways to group the arguments associatively and the decompositions rules {\sf Dec-A-L} and {\sf Dec-A-R} allow one to consider all groupings.
\end{proof}
%The proofs of these theorems can be easily extended to the other equational theories discussed in this paper using similar argumentation. We refer to these theorems and proofs when discussing the other equational theories.
 
The addition of associative function symbols allows for  more than one decomposition and thus more than one lgg in contrast to higher-order pattern generalization which results in a unique lgg . If we wish to compute the complete set of lggs we would simply exhaust all possible applications of the above rules. However, for most applications an ``optimal'' generalization is sufficient. We postpone discussion  till the next section.

The decomposition rule for commutative symbols is also pretty intuitive:
\infrule{Dec-C}{Commutative Decomposition}
     {\{X(\vect{x}) : f(t_1,t_2) \triangleq f(s_1,s_2)\}\uplus A ;\;  S ;\;  \sigma \Lra \\  
     \hspace*{3em} \{Y_1(\vect{x}): t_1\triangleq s_i ,\ Y_2(\vect{x}): t_2\triangleq s_{(i\mod 2)+1}\}\cup A ;\;  S ;\;  
     \sigma\{X\mapsto \lambda \vect{x}.f(Y_1(\vect{x}),Y_2(\vect{x}))\},}
     [\noindent where $\mathit{Ax}(f) = \lbrace {\sf C}\rbrace$, $i\in \lbrace 1, 2\rbrace$, and $Y_1$ and $Y_2$ are fresh variables of appropriate types.]\vspace{0.3cm}

We refer to the extension of $\mathcal{G}_{base}$ by the commutativity rule as $\mathcal{G}_C$. We can easily extend the termination, soundness, and completeness results to  $\mathcal{G}_\sC$. Notice that also for commutative generalization, the lgg is not necessarily unique. 

Unlike commutativity, which considers a fixed number of terms, and associativity, which enforces an ordering on terms, AC function symbols allow an arbitrary number of arguments with no fixed ordering on the terms. The corresponding decomposition rules take it into account:

\infrule{{\sf Dec-AC-L}}{Associative-Commutative Decomposition Left}
    {\{X(\vect{x}) : f(t_1,\ldots,t_n) \triangleq f(s_1,\ldots,s_m)\}\uplus A ;\;  S ;\;  \sigma \Lra \\  
     \hspace*{3em} \{Y_1(\vect{x}):h(t_{i_{1}},\ldots, t_{i_{l}} )\triangleq s_{k} , \ %\\\hspace*{3.4em}
      Y_2(\vect{x}):f(t_{i_{(l+1)}},\ldots, t_{i_n})\triangleq f(s_1,\ldots,s_{k-1},$ $s_{k+1},\ldots, s_{m})\}\cup A ;\;  \\ \hspace*{3em} S ;\;  
     \sigma\{X\mapsto \lambda \vect{x}.f(Y_1(\vect{x}),Y_2(\vect{x}))\},}
[\noindent where $\mathit{Ax}(f) = \lbrace \sf A,C\rbrace$, $\lbrace i_{1},\ldots, i_n \rbrace \equiv \lbrace 1,\ldots, n\rbrace$, $l\in \lbrace 1,\ldots, n-1\rbrace$,  $k\in \lbrace 1,\ldots, m\rbrace$,  $n,m\geq 2$, and $Y_1$ and $Y_2$ are fresh variables of appropriate types.]

\infrule{{\sf Dec-AC-R}}{Associative-Commutative Decomposition Right}
    {\{X(\vect{x}) : f(t_1,\ldots,t_n) \triangleq f(s_1,\ldots,s_m)\}\uplus A ;\;  S ;\;  \sigma \Lra \\  
     \hspace*{3em} \{Y_1(\vect{x}):t_k\triangleq f(s_{i_{1}},\ldots, s_{i_{l}}) ,\ %\\\hspace*{3.4em}
      Y_2(\vect{x}):f(t_1,\ldots,t_{k-1},t_{k+1},\ldots, t_{m}) \triangleq f(s_{i_{(l+1)}},\ldots, s_{i_n}) \}\cup A ;\;  \\ \hspace*{3em} S ;\;  
     \sigma\{X\mapsto \lambda \vect{x}.f(Y_1(\vect{x}),Y_2(\vect{x}))\},}
[\noindent where $\mathit{Ax}(f) = \lbrace\sf  A,C\rbrace$, $\lbrace i_{1},\ldots, i_n \rbrace \equiv \lbrace 1,\ldots, m\rbrace$, $l\in \lbrace 1,\ldots, m-1\rbrace$,  $k\in \lbrace 1,\ldots, n\rbrace$,  $n,m\geq 2$, and $Y_1$ and $Y_2$ are fresh variables of appropriate types.]\vspace{0.3cm}

We refer to the extension of $\mathcal{G}_{base}$ by the AC decomposition rules as $\mathcal{G}_{\sA\sC}$. Again, termination, soundness and completeness are easily extended to this case.

\section{Towards Special Fragments}
\label{sec:special:fragments}

This section is devoted to computing special kind of ``optimal'' generalizations, which can be done more efficiently than the general unrestricted cases considered in the previous section. 

The idea is the following: The equational decomposition rules introduce branching in the search space. Each branch can be developed in linear time, but there can be too many of them. However, if the branching factor is bounded, we could choose one of the alternative states (produced by decomposition) based on some ``optimality'' criterion, and develop only that branch. Such a greedy approach will give one ``optimal'' generalization.

In order to have a ``reasonable'' complexity, we should be able to choose such an optimal state from ``reasonably'' many alternatives in ``reasonable'' time. For this, our idea is to treat all the alternative states obtained by an equational decomposition step as syntactic anti-unification problems, compute lggs for each of them (which can be done in linear time), choose the best one among those lggs (e.g., less general than the others, or, if there are several such results, use some heuristics), and restart equational anti-unification algorithm from the state which led to the computation of that best syntactic lgg. When the branching factor is constant, this leads to a quadratic algorithm, and when it is linearly bounded, we get a cubic algorithm. These are the cases we consider below. We would also need to decompose in a more clever way than in the rules above, where the decomposition was based on an arbitrary choice of a subterm.

Hence, we need to identify fragments of equational anti-unification problems which would have the decomposition branching factor constant or linearly bounded.  We start by introducing the following concepts.

\begin{definition}[$E$-refined generalization]
Given two terms $t$ and $s$ and their $\cE$-generalizations $r$ and $r'$, we say that $r$ is \emph{at least as good as} $r'$ with respect to $\cE$ if either $r'\preceq_\cE r$ or they are not comparable with respect to $\preceq_\cE$.

An $\cE$-generalization $r$ of $t$ and $s$ is called their {\em $E$-refined generalization} iff $r$ is at least as good (with respect to $\cE$) as a syntactic lgg of $t$ and $s$.
\end{definition}

Note that every syntactic generalization is also an $\cE$-generalization. A direct consequence of this definition is that every element of the minimal complete set of $\cE$-generalizations (where $\cE$ is A, C, or AC) of two terms  is an $\cE$-refined generalization of $t$ and $s$. However, there might exist $\cE$-refined generalizations which do not belong to the minimal complete set of generalizations.

Looking back at the informal description of the construction above, we can say that at each branching point we will be aiming at choosing the alternative that would lead to ``the best'' $\cE$-refined generalization.

The concept of $E$-refined allows us to compute better generalizations than the base procedure would do, without concerning ourselves with certain difficult to handle decompositions. We will outline what we mean by ``difficult'' in later sections. Some of these difficult decompositions can be handled by finding {\em alignments} between two sequences of terms. 

\begin{definition}[Alignment, Rigidity Function]
Let $w_1$ and $w_2$ be strings of symbols.
Then the sequence $a_1 [i_1 , j_1 ] \cdots a_n [i_n , j_n ]$, for $n \geq 0$ and $a_k$ are not variables, is an {\em alignment} if
\begin{itemize}
\item $i$’s and $j$’s are integers such that $0 < i_1 < \cdots < i_n < |w_1 |$ and $0 < j_1 <
\cdots < j_n < |w_2 |$, and
\item $a_k = w_1 |_{i_k} = w_2|_{j_k}$ , for all $1 \leq k \leq n$. An alignment of the form $a_1 [i, j]$ will be referred to as a {\em singleton alignment}
\end{itemize}

The set of all alignments will be denoted by $\textbf{A}$. A {\em (singleton) rigidity function} $\mathcal{R}$ is a function that returns, for every pair of strings of symbols
$w_1$ and $w_2$, a set of (singleton) alignments of $w_1$ and $w_2$.
\end{definition}

\begin{definition} [Pair of argument head sequences and multisets]
Let $t= f(t_{1},\ldots,t_{n})$ and $s= f(s_{1},\ldots,s_{m})$. Then the \emph{pair of argument head sequences} and the \emph{pair of argument head multisets} of $t$ and $s$, denoted respectively as $\pahs(t,s)$ and $\pahm(t,s)$, are defined as follows:
\begin{alignat*}{1}
 & \pahs(t,s)=  \left\langle (\mathit{head}(t_{1}),\ldots,\mathit{head}(t_{n})),\ (\mathit{head}(s_{1}),\ldots,\mathit{head}(s_{m})) \right\rangle.\\
 & \pahm(t,s)=  \left\langle \ldblbrace \mathit{head}(t_{1}),\ldots,\mathit{head}(t_{n})\rdblbrace, \ \ldblbrace\mathit{head}(s_{1}),\ldots,\mathit{head}(s_{m})\rdblbrace \right\rangle.
\end{alignat*}

These notions extend to AUPs: A pair of argument head sequences (resp. multisets) of an AUP $X(\vect{x}):t \triangleq s$ is the pair of argument head sequences (resp. multisets) of the terms $t$ and $s$.
\end{definition}

There is a subset of AUPs, referred to as {\em $1$-Determined AUPs}, which contain associative function symbols and have an interesting $\mathcal{E}$-refined generalizations are computable in linear time.  The more general $r$-determined AUPs allow a bounded number of possible choices, that is $r$ choices, whenever associative decomposition may be applied. Even for $2$-determined AUPs computing the set of lggs is of exponential complexity. Therefore, we introduce the notion of {\em $(\mathcal{R},C,\mathcal{G})$-optimal generalization} where $\mathcal{R}$ is a so called rigidity function~\cite{DBLP:journals/jar/KutsiaLV14} and $C$ is a choice function picking one of available decompositions. Under such optimality conditions, we are able to compute an $\mathcal{E}$-refined generalization in quadratic time for $k$-determined AUPs and in cubic time for arbitrary AUPs with associative function symbols.

The equational decomposition rules above are too non-deterministic and the computed set of generalizations has to be minimized to obtain minimal complete sets of generalizations. However, even if we performed more guided decompositions, obtaining e.g., terms with the same head in new AUPs (as in \cite{DBLP:journals/jar/KutsiaLV14}), there would still be alternatives. For instance, consider the following AUP where $f$ is associative: 
$ X(\vect{x}):f(t_{1}, \ldots t_{i},\ldots, t_{j},\ldots, t_n)\triangleq f(s_{1}, \ldots s_{i},\ldots, s_{j},\ldots,  s_m).$
Now let $head(t_{i}) = head(s_{j})$, $head(s_{i}) = head(t_{j})$, and for every other term comparison whose index is $\leq j$ the head symbols are not equivalent. Under these assumptions there is not enough information to decide which decomposition is less general. Furthermore, this can be generalized from two possible decompositions to $k$ possibilities. 

Under certain conditions we can force a term to have a single decomposition path, what we will refer to as a {\em $1$-determined} condition which is equivalent to unique longest common subsequence of head symbols. We formally define $k$-determined AUPs using the following sequence of definitions:

\begin{definition}[$k$-determinate set]
\label{def:determined:set}
Given the pair of sequences of symbols $\langle s_1, s_2 \rangle$ with $s_1=(a_1,\ldots,a_n)$ and  $s_2=(b_1,\ldots,b_m)$, and a positive integer $k$, the \emph{(strict) $k$-determinate set} of $s_1$ and $s_2$, denoted $\kdet{k}{s_1}{s_2}$ ($\skdet{k}{s_1}{s_2}$), is defined as follows:
\begin{itemize}
\item If $n=0$ and $m\not =0$ or vice versa, then $\kdet{k}{s_1}{s_2}=\emptyset $.
\item Otherwise, let $1\le i \le \min(n,m)$ be a number such that for the multiset $M_i= (\ldblbrace a_1 \rdblbrace \cap \ldblbrace b_1 \rdblbrace) \cup (\ldblbrace a_2,\ldots, a_i\rdblbrace \cap \ldblbrace b_2,\ldots, b_i\rdblbrace )\neq \emptyset$ we have $M_{i} \cap \ldblbrace b_{i+1},\ldots, b_m\rdblbrace = M_{i} \cap \ldblbrace a_{i+1},\ldots, a_n\rdblbrace = \emptyset$. Let $K$ ($K_{s}$) be the set of pairs $\{ a_{j_1}[j_1,j_2] \mid a_{j_1} = b_{j_2} \text{ and } j_1=1 \text{ iff } j_2=1 \}$ ( $\{ a_{j_1}[j_1,j_2] \mid a_{j_1} = b_{j_2} \text{ and }  \}$). If $K$ has at most $k$ elements, then
\[det(k, s_1,s_2):= \bigcup_{a_{j_1}[j_1,j_2]\in K}  \mathit{add}(a_{j_1}[j_1,j_2],\kdet{k}{ (a_{j_1+1},\ldots,a_n)}{ (b_{j_2+1},\ldots,b_m)}) . \]

 \[\mathit{add}(a,A) = \left\lbrace \begin{array}{c|c} \left\lbrace (a,A)\right\rbrace  &  A \not = \emptyset\\
 \emptyset & otherwise  \end{array} \right.\]
\item Otherwise, $det(k, s_1,s_2)=\left\lbrace \emptyset\right\rbrace$ .
\end{itemize}
Note that $\skdet{k}{s_1}{s_2}$ is defined analogously using $K_{s}$ instead of $K$.
We will refer to the pairs $(a,A)$ where $a$ is a singleton alignment and $A$ a $k$-determinate set as {\em blocks}.
\end{definition}

We will use $\skdet{k}{s_1}{s_2}$ when considering commutativity in Section~\ref{sect:special:C}
.
\begin{example} We illustrate the previous definition:
\begin{itemize}
\item $\kdet{1}{(a,b)}{(a,b)} = \{\block{a[1,1]}{\block{b[2,2]}{\emptyset}} \}$.
\item $\kdet{1}{(a,a)}{(b,a)} = \{ \block{\{a[2,2]}{\emptyset} \}$.
\item $\kdet{1}{(a,c,c,b,a,c)}{(a,d,b,a,c)} = \{ \block{a[1,1]}{\block{b[4,3]}{\block{a[5,4]}{\block{c[6,5]}{\emptyset}}}}\}$.
\item $\kdet{1}{(a,b,a)}{(c,a,b,c)} = \left\lbrace \emptyset \right\rbrace $
\item $\kdet{1}{(a,b,d)}{(c,a,b,c)} = \{\block{b[2,3]}{\emptyset}\}$
\item $\kdet{2}{(a,b,a)}{(c,a,b,c)} = \left\lbrace \block{b[2,3]}{\emptyset}\right\rbrace $
\item $\skdet{1}{(a,b)}{(b,a)} = \left\lbrace \block{a[1,2]}{\emptyset}, \block{b[2,1]}{\emptyset} \right\rbrace $
\item $\kdet{2}{(c,a,b,c)}{(d,b,a,d)} = \{ \block{a[2,3]}{\emptyset} ,\block{b[3,2]}{\emptyset} \}$.
\item $\kdet{3}{(a,b,a,c,d)}{(c,a,b,a,d)} = \\ \ \ \ \{ \block{b[2,3]}{ \block{a[1,1]}{\emptyset}} ,\block{a[3,2]}{\block{d[2,3]}{\emptyset}},\block{a[3,4]}{\emptyset} \}$.
\item $\kdet{k}{(a,a)}{(b,c,d)} = \left\lbrace \emptyset \right\rbrace $.
\item $\kdet{k}{(a,b)}{(a)} =  \emptyset  $.
\item $\kdet{k}{(a,a)}{(a)} = \left\lbrace \emptyset \right\rbrace  $.

\end{itemize}
Even though $\kdet{k}{(a,b)}{(a)}$ are related $\kdet{k}{(a,a)}{(a)}$ the formalism does not handle them as similar. This merely makes the formalism a little more restricted.  Notice that a unique longest common subsequence of two symbol sequences is not equivalent to k-determined. Consider the following example:
\begin{itemize}
\item $\kdet{k}{(c,a,a,d)}{(c,a,b,a,d)} = \{ \block{c[1,1]}{ \block{a[1,1]}{\block{d[2,3]}{\emptyset}}} \}.$
\end{itemize} 
The alignment representing its longest common subsequence is
\begin{itemize}
\item $c[1,1]a[2,2]a[3,4]d[4,5]$
\end{itemize} \end{example}

\begin{definition}[$k$-determined term pairs]
A pair of terms $\langle t, s\rangle$ is $k$-determined iff either $\head(t)\neq \head(s)$ or $\head(t)= \head(s)=f$ and
\begin{itemize}
\item $Ax(f)=\emptyset$, or
\item $Ax(f)=\{\sA \}$ and $\kdet{k}{\pahs(t}{s)}\not = \emptyset$ , or
\end{itemize}
Furthermore we say that the pair $\langle t, s\rangle$ is {\em total $k$-determined} if $t=\lambda x_1,\ldots, x_n.t' $, $s=\lambda y_1,\ldots, y_n.s'$ and $|t'| = |s'| =1$ or for each $(a[i,j],S) \in \kdet{k}{\pahs(t}{s)}$ where $t_i$ is the term at the $i^{\mbox{th}}$ position of $t$ and $s_j$ is the term at the $j^{\mbox{th}}$ position of $s$ the term pair  $\langle t_i, s_j\rangle$ is total $k$-determined.
\end{definition}

 \begin{proposition}
 \label{prop:complexantilong}
The complexity of checking if the terms of an  AUP $X(\bar{x}): \lambda x_{1},\ldots, x_{l}.f(t_{1},\ldots,t_{n}) \triangleq \lambda y_{1},\ldots, y_{k}.f(s_{1},\ldots,s_{m})$ is $1$-determined is $O(n)$ and total $1$-determined is $O(n^2)$, where $n$ is maximum of the length of the two terms. 
 \end{proposition}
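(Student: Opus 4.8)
The plan is to split the statement into its two parts and, in both, reduce the determinacy test to a computation on the argument--head sequences. First I would dispose of the cheap cases: comparing $\head(t)$ and $\head(s)$ and inspecting $\mathit{Ax}(f)$ costs $O(1)$, so the only real work occurs when $\head(t)=\head(s)=f$ with $\mathit{Ax}(f)=\{\sA\}$, where one must decide whether $\kdet{1}{\pahs(t}{s)}\neq\emptyset$. Reading off $\pahs(t,s)=\langle(\head(t_1),\ldots,\head(t_n)),(\head(s_1),\ldots,\head(s_m))\rangle$ takes $O(n)$ time, since each $\head(t_i)$ and $\head(s_j)$ is the top symbol of the corresponding argument and is obtained in constant time. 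Thus the whole proposition reduces to bounding the cost of the $\kdet{1}$ computation on two symbol sequences of length at most $n$.

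For the linear bound (Part 1) the key observation is that for $k=1$ the recursion defining $\kdet{1}{s_1}{s_2}$ is deterministic: at every level the candidate set $K$ is required to have at most one element, and as soon as $|K|>1$ (or no admissible split index $i$ exists) the computation halts returning $\{\emptyset\}$; otherwise the single pair in $K$ fixes the next block and the recursion continues on the pair of suffixes $(a_{j_1+1},\ldots)$, $(b_{j_2+1},\ldots)$. Hence the computation is a single chain of calls, each acting on strictly shorter suffixes, so there are at most $n$ levels. I would precompute, for each symbol and each of the two sequences, the sorted list of its occurrence positions; this lets the admissibility of a split index $i$ (the isolation condition $M_i\cap\ldblbrace b_{i+1},\ldots\rdblbrace=M_i\cap\ldblbrace a_{i+1},\ldots\rdblbrace=\emptyset$) and the size of $K$ be tested in time proportional to the prefix inspected at that level. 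Since the prefixes consumed by successive levels are charged to disjoint portions of the two sequences, the per-level costs telescope to $O(n)$.

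For the quadratic bound (Part 2) I would argue over the recursion tree of the total-determinacy predicate. Because $k=1$ forces $\kdet{1}{\pahs(t}{s)}$ to be at most a single block chain (the defining union ranges over a $K$ with $|K|\le 1$), the recursion descends only into the aligned argument pairs $\langle t_i,s_j\rangle$ picked out by the blocks $(a[i,j],S)$ of that chain, and these use pairwise distinct argument positions. Consequently each subterm of $t$ occurs as the left component of at most one recursive call, so the total number of calls is $O(n)$. Each call performs one $1$-determinacy test, which by Part 1 costs $O(n)$, giving the bound $O(n)\cdot O(n)=O(n^2)$. The aligned pairs needed to drive the recursion are produced by the same single-pass computation used in Part 1, so no extra asymptotic cost is incurred in extracting them.

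The main obstacle is the linear-time claim of Part 1, specifically showing that the per-level work truly telescopes. The delicate point is the isolation test, which looks ahead past the split index $i$; a naive evaluation could rescan suffixes repeatedly and destroy linearity. The crux of the proof is therefore the preprocessing of occurrence information together with an amortization argument charging each comparison to a position that is consumed and never revisited, so that the single deterministic chain of calls costs $O(n)$ in total. The analysis in Part 2 is comparatively routine once Part 1 is in place; there the only care needed is to confirm that distinct recursive calls address disjoint subterms, so that the number of calls is genuinely linear rather than exponential.
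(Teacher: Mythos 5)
The paper states this proposition without any proof, so there is no in-paper argument to compare yours against; judged on its own, your proposal is correct and supplies essentially the argument the paper omits. The reduction to the argument-head sequences, the observation that $k=1$ collapses the recursion of Definition~\ref{def:determined:set} into a single chain of calls on shrinking suffixes, and the count of $O(n)$ recursive calls each costing $O(n)$ in the total case are all sound. One point deserves a bit more care than you give it: at a given level the scan extends to the isolation index $i$, while only the positions up to the aligned pair $(j_1,j_2)$ are consumed, so the scanned regions of consecutive levels are \emph{not} literally disjoint, contrary to your phrasing. The amortization still goes through, but for a different reason: whenever the chain is to continue (i.e.\ $K$ is a singleton and the recursive call is nontrivial), the unique common symbol of the prefixes occurs exactly once in each remaining sequence, so a minimal admissible $i$ equals $\max(j_1,j_2)\le j_1+j_2$, and the per-level scan lengths are bounded by the telescoping sums of consumed positions from the two sequences, plus a single final $O(n)$ scan at the terminating level. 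Making that inequality explicit closes the only soft spot in your linear bound; the quadratic bound for total $1$-determinedness then follows as you describe, since the aligned pairs occupy pairwise distinct argument positions and hence generate at most one call per subterm.
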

 
Checking $k$-determinedness of an AUP  is a harder problem complexity-wise. For example, given the sequences $(a,\ldots,a)$ and $(a,\ldots,a)$ there are $n^2$ ways to align the terms which have to be checked. Moreover, if we want to check total $k$-determinedness  we have to again do a quadratic check for each pair of aligned terms resulting in an $O(n^4)$ procedure. 

\section{Associative Generalization: Special Fragments and Optimality}
\label{sect:special:A}
\subsection{Associativity and $1$-Determined AUPs}
We provide a linear time algorithm for higher-order $\{{ \sf A}\}$-refined pattern generalization of AUPs which are $1$-determined. Essentially, at every step there is a single decomposition choice which can be made. 
\begin{theorem}
\label{thm:AUPAnti}
A higher-order $\{{ \sf A}\}$-refined pattern generalizer for a {\em total $1$-determined} AUP can be computed in linear time.
\end{theorem}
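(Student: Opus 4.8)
The plan is to proceed by structural induction on the term pair, using the fact that ``total $1$-determined'' means at every level of recursion there is a \emph{unique} block to follow, so the nondeterminism of the associative decomposition rules collapses entirely. First I would set up the base case: when the AUP is $X(\vect{x}):\lambda\vect{x}.t' \triangleq \lambda\vect{x}.s'$ with $\size{t'}=\size{s'}=1$, or more generally when $\head(t)\neq\head(s)$, the {\sf Sol} rule (or {\sf Dec} for a common free head) applies directly and produces an lgg in time linear in the size of the arguments, exactly as in $\cG_{base}$. The {\sf Abs} rule strips leading abstractions in linear time, so I may assume both terms have the form $f(t_1,\ldots,t_n)$ and $f(s_1,\ldots,s_m)$ with $\mathit{Ax}(f)=\{\sA\}$.

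For the inductive step, the key observation is that $1$-determinedness guarantees $\kdet{1}{\pahs(t,s)}$ contains a single block $(a[i,j],A)$, and this single block dictates a unique sequence of applications of {\sf Dec-A-L} and {\sf Dec-A-R}. Concretely, I would show that the alignment $a[i,j]$ tells the algorithm to group the first $i$ arguments of $t$ against the first $j$ arguments of $s$ in such a way that the matching heads $t_i$ and $s_j$ are paired, everything strictly before them is absorbed into a single generalization variable (since those heads do not match anything, by the defining disjointness condition $M_i\cap\ldblbrace b_{i+1},\ldots\rdblbrace=M_i\cap\ldblbrace a_{i+1},\ldots\rdblbrace=\emptyset$ of the $1$-determinate set), and the recursion continues on the pair $\langle t_i,s_j\rangle$ and on the suffixes. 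I would argue that because the block is unique, there is no branching: the choice function has exactly one candidate at each decomposition step, so the greedy procedure and the complete procedure coincide, and the resulting generalization is automatically $\{\sA\}$-refined (indeed it is a genuine lgg on this fragment, hence at least as good as the syntactic lgg). Total $1$-determinedness then lets me apply the induction hypothesis to each $\langle t_i,s_j\rangle$, which is itself total $1$-determined by definition.

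The complexity bookkeeping is the step I would handle most carefully. By Proposition~\ref{prop:complexantilong}, checking $1$-determinedness of a single argument pair is linear, but naively re-checking at every recursive call would reintroduce a quadratic factor. The plan is therefore to interleave the determinedness computation with the decomposition: as the unique block is read off $\pahs(t,s)$, the algorithm simultaneously emits the corresponding {\sf Dec-A-L}/{\sf Dec-A-R} steps and descends, so that each argument head is inspected a constant number of times across the whole run. A standard amortization argument then shows the total work is bounded by a constant times $\sum(\size{t_i}+\size{s_i})$, i.e.\ linear in the input size, matching the linear bound of $\cG_{base}$ on each branch.

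The main obstacle I anticipate is justifying that the single block genuinely forces a \emph{unique} decomposition path that the linear-time syntactic generalizer can follow without backtracking. I must verify that the disjointness conditions in the definition of $\kdet{1}{\cdot}{\cdot}$ really do prevent any ambiguity about where the aligned pair of matching heads sits relative to the unmatched prefixes, and that absorbing the unmatched prefix terms into fresh generalization variables (via {\sf Sol}) loses no generality on this fragment. Establishing this ``unique alignment $\Rightarrow$ unique decomposition'' correspondence rigorously, and confirming it yields an $\{\sA\}$-refined rather than merely syntactic generalization, is the crux; once it is in place, termination and soundness follow from the earlier theorems and the complexity bound follows from the amortization above.
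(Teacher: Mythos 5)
Your proposal is correct and follows essentially the same route as the paper's proof: the single block of $\kdet{1}{\cdot}{\cdot}$ forces the pairing of the matched heads, the unmatched prefix arguments can be grouped arbitrarily (the paper notes there are roughly $2^{r-1}$ groupings but all are equally general since no heads match), and total $1$-determinedness lets the recursion proceed with the linear syntactic algorithm on each piece. The only differences are presentational: you induct structurally on the term pair where the paper inducts on the maximal arity of associative occurrences, and you are more explicit than the paper about amortizing the cost of reading off the blocks so the overall bound stays linear.
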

\begin{proof}
If the AUP does not contain an associative function symbol, then its $E$-refined generalization, which is also an lgg, can be computed in linear time~\cite{DBLP:journals/jar/BaumgartnerKLV17}. If it does contain an associative function symbol, we have two alternatives: either every occurrence of the associative function symbol has two arguments (remember that our terms are in flattened form), or not. In the former case, the associative decomposition rules do not differ from the syntactic decomposition rule {\sf Dec} and we can only apply the latter. It means that we can still use the linear algorithm from~\cite{DBLP:journals/jar/BaumgartnerKLV17}. The rest of the proof is about the case when there are occurrences of associative function symbols with more than two arguments. The proof goes by induction on the maximal number of such arguments.

We assume for the induction hypothesis that if every instance of the associative function symbol in the AUP has at most $n$ arguments, then it is solvable in linear time, and show that the same holds for $n+1$. Let us assume that the AUP we are currently considering has the following form $X(\vect{x}):f(t_{1}, \ldots, t_m)\triangleq f(s_{1}, \ldots, s_{k})$ where $f$ is associative and $\max\lbrace m,k\rbrace= n+1$. Assume without loss of generality that $k=n+1$. Also, assume that no other occurrence of $f$ in the given AUP has more than $n$ arguments. We make this assumption in order to reduce the complexity of associative decomposition in the AUP and thus, apply the induction hypothesis. If   $\head(t_{1}) = \head(s_{1})$,then their lgg should not be a variable. Therefore, we can apply {\sf Dec-A-L}, which results in the AUPs $X(\vect{x}): t_{1} \triangleq s_{1}$ (whose further decomposition will make sure that they $t_1$ and $s_1$ are not generalized by a generalization variable) and $X(\vect{x}):f(t_{2}, \ldots, t_m)\triangleq f(s_{2}, \ldots, s_{n+1})$. Notice that both of the resulting  AUPs, by our assumptions, only contain $f$ with not more than $n$ arguments. Thus, by the induction hypothesis the theorem holds in this case. 

For the next step we assume $s$ and $t$ are the terms of the AUP and that $(h[l,l],S)\in \kdet{1}{\pahs(t}{s)}$ s.t. $Ax(h)= \left\lbrace A\right\rbrace$.  Therefore, we can perform {\sf Dec-A-L} only on the first argument $l-1$ times, which gives the following new AUPs: $\{X_1(\vect{x}): t_{1} \triangleq s_{1}, \ldots, X_{l-1}(\vect{x}): t_{l-1} \triangleq s_{l-1},\, X_{l}(\vect{x}):f(t_{l} \ldots, t_m)\triangleq f(s_{l}, \ldots, s_{n+1}) \}$. 
All the resulting  AUPs, by our assumptions, only contain $f$ with not more than 
$n$ arguments, thus by the induction hypothesis the theorem holds in this case. 

For the next step we assume $s$ and $t$ are the terms of the AUP and that $(h[i,j],S)\in \kdet{1}{\pahs(t}{s)}$ s.t. $Ax(h)= \left\lbrace A\right\rbrace$ and $i\not = j$. This is similar to the previous case except there is more than one possible way to apply associative decomposition. More precisely, the number of possible ways is $F(l-j+1)$ where 
\begin{alignat*}{1}
& F(0) = 1, \qquad \qquad F(r+1) = \sum_{w=1}^{r+1} F(r+1-w) \quad \text{for }r\ge 0.
\end{alignat*}
%
%$$F(r+1) = \sum_{w=1}^{r+1} F((r+1)-w)$$
%$$F(0) = 1$$
which is roughly $F(r) = 2^{(r-1)}$. However, being that none of the head symbols of obtained term-pairs are equivalent nor can their head symbols be equivalent to $f$, we know that none of the resulting AUPs will require further decomposition. Thus, we need to apply associative decomposition. This can be easily performed be performed by some heuristic. The result will be a set of AUPs containing $X(\vect{x}):f(t_{j} \ldots t_m)\triangleq f(s_{l}, \ldots s_{n+1})$ and thus by the induction hypothesis and our assumptions, the theorem holds in this case. 

For the final step we just need to apply a simple induction argument on the number of times in a term the associative symbol $f$ occurs with arity $n+1$. The above argument provides the step case and base case being that we prove the theorem for one occurrence and can use the proof for $p$ occurrences. Thus, the theorem holds.  
\end{proof}

 In the next section we consider AUPs which are  $k$-determined for $k>1$. This will require us introducing a new concept of optimality based on a choice function greedily applied during decomposition. 
 \subsection{Choice Functions and Optimality}
In this section procedures and optimality conditions for total $k$-determined AUPs, for $k>1$, that is AUPs where there are at most $k$ ways to apply equational decomposition. 

If we were to compute the set of $E$-refined generalizations for a total $k$-determined  AUP by testing every decomposition, even for $k=2$ the size of search space is too large to deal with efficiently. However, we can find a {\em $(\mathcal{R},C,\mathcal{G})$-optimal $E$-refined generalization} (precisely defined below) in quadratic time, where $\mathcal{R}$ is a singleton rigidity function, $C$ a $\mathcal{R}$-choice function, $\mathcal{G}$ is a set of state transformation rules. Essentially, $(\mathcal{R},C,\mathcal{G})$-optimality means the $\mathcal{R}$-choice function chooses the ``right'' computation path via $\mathcal{G}$ based on the singleton rigidity function $\mathcal{R}$. The effect is that we reduce the problem of total $k$-determined AUPs to the case of total $1$-determined AUPs with the additional complexity of computing the choice function at each step. We will provide a choice function with linear time complexity based on the procedure for $\mathcal{G}_{base}$.

We will denote the set of all AUPs by $\mathbb{A}$. We will need the concept for the following definitions. 

\begin{definition}[$(P,a)$-decomposition]
Let $P\equiv X(\bar{x}): \lambda x_{1},\ldots, x_{l}.f(t_{1},\ldots,t_{n}) \triangleq \lambda y_{1},\ldots, y_{k}.f(s_{1},$ $\ldots,s_{m})$, $a$ is an alignment of $\left\langle w_{1} , w_{2} \right\rangle_P$. An {\em $(P,a)$-decomposition of $P$} is $dec(P,a) =  \lbrace  Y_{(i,j)}(\vect{y}_{(i,j)}):t_{i}\triangleq s_{j }\mid h[i,j]\in a\ \rbrace $, where $Y_{(i,j)}$ are new variables of appropriate type and $\vect{y}_{(i,j)}$ are bound variables from $\vect{x}$, which appear in $t_{i}\triangleq s_{j}$.
\end{definition}
\begin{definition}[$\mathcal{G}$-feasible]
Let $A;S;\sigma$ be a state s.t. $P\in A$ where $P\equiv X(\vect{x}):\lambda x_{1},\ldots, x_{l}.f(t_{1},\ldots,$ $t_{n}) \triangleq \lambda y_{1},\ldots, y_{k}.f(s_{1},\ldots,s_{m})$, $a$ be an alignment of $\left\langle w_{1}, w_{2} \right\rangle_P$ and $\mathcal{G}_{base}\subseteq\mathcal{G}$ be a set of state transformation rules. We say that $dec(P,a)$ is {\em $\mathcal{G}$-feasible} if there exists $A;S;\sigma\Longrightarrow^{*} A';S';\sigma'$ using $\mathcal{G}$ such that $ A'= (A\setminus P) \cup dec(P,a)$. 
\end{definition}

\begin{definition}[$(\mathcal{R},P,\mathcal{G})$-branching]
Let $P\equiv X(\vect{x}):\lambda x_{1},\ldots, x_{l}.f(t_{1},\ldots,t_{n}) \triangleq \lambda y_{1},\ldots, y_{k}.$ $f(s_{1},\ldots,s_{m})$,  $\left\langle w_{1} , w_{2} \right\rangle_P$ be its pair of argument head sequences, $\mathcal{R}$ be a singleton rigidity function, and $\mathcal{G}_{base}\subseteq \mathcal{G}$ be a set of state transformation rules. An {\em $(\mathcal{R},P,\mathcal{G})$-branching} is a set $\mathit{B}(\mathcal{R},P) = \left\lbrace  \mathit{dec}(P,a) \ \vert \ a\in \mathcal{R}(w_1,w_2) \mbox{ and }\ \mathit{dec}(P,a)\ \right. $ $\left.  \mbox{is } \mathcal{G}\mbox{-feasible} \right\rbrace$. 
\end{definition} 
\begin{definition}[$\mathcal{R}$-Choice function]
Let $\mathcal{R}$ be a singleton rigidity function and $\mathcal{G}_{base}\subseteq \mathcal{G}$ be a set of state transformation rules. An $\mathcal{R}$-choice function $C_{(\mathcal{R},\mathcal{G})}:\mathbb{A} \rightarrow \textbf{A}$ is a partial function from AUPs to alignments such that if for some $P\in \mathbb{A}$ , $C_{(\mathcal{R},\mathcal{G})}(P) = a$, then $\mathit{dec}(P,a)\in \mathit{B}(\mathcal{R},P)$.
\end{definition}

\begin{definition}[$(\mathcal{R},C,\mathcal{G})$-optimal generalization]
\label{ouroptimal}
Let $A$ be $\{X(\bar{x}):t\triangleq s \}$, $\mathcal{R}$ be a singleton rigidity function, $C$ be an $\mathcal{R}$-choice function, and $\mathcal{G}_{base}\subseteq \mathcal{G}$ be a set of state transformation rules, which compute generalizations. We say that a generalization $k$ of the terms $t$ and $s$ is an $(\mathcal{R},C,\mathcal{G})$-optimal generalization if $r=X\sigma$, where $\sigma$ is  resulting from the derivation $A;\emptyset;\emptyset \Longrightarrow^{*} \emptyset ;S;\sigma$ using the rules of $\mathcal{G}$, in which  every decomposition is either syntactic or respects $C$.
\end{definition}

In the following subsection we show how the above definitions can lead to a more general result (compared to the one in the previous section) concerning associative generalization. 

\subsection{$k$-Determined Associative Generalization}
\label{subsec:RD}
Before defining our concrete choice function, we must define the singleton rigidity function we will use. Intuitively, it should select alignments from prefixes of involved sequences. The prefixes are of the same length and should be maximal among those that contain at most $k$ common elements. Formally, it is defined as follows:

\begin{definition}
Let $w_{1} =  (a_1,\ldots, a_n)$ and  $w_{2} = (b_1,\ldots, b_m)$ be sequences of symbols and $k\ge 1$ be an integer. We define the singleton rigidity function $\mathcal{R}_{\sf A}^{k}$ as  
\begin{equation}
\mathcal{R}_{\sf A}^{k}(w_1,w_2) = \left\lbrace \begin{array}{c|c} 
\left\lbrace a_{l}\left[ l,k\right]\ \middle\vert\ (a_{l}\left[ l,k\right],S)\in \kdet{k}{w_{1}}{w_{2}} \right\rbrace  & \begin{array}{c} \kdet{k}{w_{1}}{w_{2}}\not = \emptyset\\\end{array}\\
\emptyset & otherwise
\end{array} \right. 
\end{equation}
\end{definition}

Now we define a choice function taking an arbitrary singleton rigidity function. 

\begin{definition}
Let $P\equiv X(\vect{x}):\lambda x_{1},\ldots, x_{l}.f(t_{1},\ldots,t_{n}) \triangleq \lambda y_{1},\ldots, y_{k}.$ $f(s_{1},\ldots,s_{m})$ be an AUP and $f$ a function symbol such that $Ax(f)\not \equiv \emptyset$.  We define the choice function $C_{(\mathcal{R},\mathcal{G})}$, where $\mathcal{R}$ is a singleton rigidity function, and $\mathcal{G}$ is a set of state transformation rules containing $\mathcal{G}_{base}$, as follows:
\begin{equation}
C_{(\mathcal{R},\mathcal{G})}(P) = \left\lbrace \begin{array}{c|l} 
a_{\min} & \quad \mathit{B}(\mathcal{R},P)\not\equiv \emptyset \\
\textbf{undef} & \quad otherwise
\end{array} \right. 
\end{equation}
where  $a_{\min}$ is an alignment of $(\head(t_1),\ldots,\head(t_n))$ and $(\head(s_1),\ldots,\head(s_m))$ such that
\begin{itemize}
\item $\mathit{dec}(P,a_{\min})\in \mathit{B}(\mathcal{R},P)$,
\item for each $\mathit{dec}(P,a)\in \mathit{B}(\mathcal{R},P)$, let $D(a)$ be the derivation \[D(a)= \{ P\};\emptyset;\emptyset \Longrightarrow^{*}_{ \mathcal{G}  } \mathit{dec}(P,a); S'; \sigma' \Longrightarrow^{*}_{\mathcal{G}_{base}} \emptyset; S; \sigma_a.\] Then for each $a\neq a_{\min}$, the corresponding $D(a)$ computes $\sigma_a$ such that $X\sigma_a$ is more general than $X\sigma_{a_{\min}}$, where $\sigma_{a_{\min}}$ is computed by $D(a_{\min})$. If there are several such $a_{\min}$'s, $C_{(\mathcal{R},\mathcal{G})}(P)$ is defined as one of them (chosen by some heuristics).
\end{itemize}
\end{definition}

The choice function outlined above uses the linear time procedure ${\mathcal{G}_{base}}$ to make a choice between the various possible alignments. Notice that we use associative decomposition for $\{P \};\emptyset;\emptyset \Longrightarrow^{*} \mathit{dec}(P,a); S'; \sigma'$ and syntactic decomposition in the derivation $\mathit{dec}(P,a); S'; \sigma' \Longrightarrow^{*} \emptyset; S; \sigma_a$.

\begin{theorem}
\label{thm:AUPWeakAnti}
A $(\mathcal{R}_{A}^{k},C_{(\mathcal{R}_{A}^{k},\mathcal{G}_{A})},\mathcal{G}_{A})$-optimal higher-order $\{{ \sf A}\}$-refined pattern generalization for a {\em total  $k$-determined} AUP $X(\vect{x}):t\triangleq s$ can be computed in $O(n^2)$ where $n$ is the size of the AUP.
\end{theorem}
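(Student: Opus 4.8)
The plan is to reduce the $k$-determined case to the already-established $1$-determined case (Theorem~\ref{thm:AUPAnti}), paying for the reduction with the cost of evaluating the choice function $C_{(\mathcal{R}_A^k,\mathcal{G}_A)}$ at each branching point. The key observation is that $(\mathcal{R}_A^k,C,\mathcal{G}_A)$-optimality forces the derivation to behave like a total $1$-determined derivation: at every state where an associative decomposition with more than one feasible alignment is possible, the choice function $C$ collapses the up-to-$k$ alternatives into a single selected alignment $a_{\min}$, and the derivation proceeds with $\mathit{dec}(P,a_{\min})$ only. So structurally the derivation follows a single path, exactly as in the linear-time argument of Theorem~\ref{thm:AUPAnti}; the only new ingredient is the per-step overhead of computing $a_{\min}$.

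First I would set up the same induction as in the proof of Theorem~\ref{thm:AUPAnti}: induct on the maximal arity of an associative function symbol occurring in the AUP, with the base cases (no associative symbol, or every associative occurrence binary) handled verbatim by the linear syntactic procedure $\mathcal{G}_{base}$. For the inductive step, at an AUP $X(\vect{x}):f(t_1,\ldots,t_m)\triangleq f(s_1,\ldots,s_k)$ with $f$ associative of arity $n+1$, I would invoke the choice function to select $a_{\min}$. By the definition of $C_{(\mathcal{R},\mathcal{G})}$ and of $\mathcal{R}_A^k$, the branching set $\mathit{B}(\mathcal{R}_A^k,P)$ contains at most $k$ feasible decompositions (this is precisely what total $k$-determinedness guarantees via $\kdet{k}{\cdot}{\cdot}$), so computing $a_{\min}$ amounts to running the linear $\mathcal{G}_{base}$ procedure on each of the $O(k)$ candidate decompositions and comparing the resulting $X\sigma_a$ by generality. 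Each such candidate derivation $D(a)$ runs in time linear in the size of the subproblem, and there are $O(k)$ of them, so one choice step costs $O(k\cdot n)$.

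Next I would account for the global cost. Since $k$ is a fixed constant, each choice step is $O(n)$; the recursion then descends into the two AUPs produced by $\mathit{dec}(P,a_{\min})$, each containing $f$ with at most $n$ arguments, so the induction hypothesis applies. The crucial accounting point is that the choice function is consulted at most once per associative node, and across the whole derivation the subproblem sizes on which we run the linear comparison sum appropriately: each argument position participates in a bounded number of choice evaluations. Summing $O(n)$ work over $O(n)$ decomposition steps yields the $O(n^2)$ bound. I would also appeal to the $E$-refinedness half of the statement by noting that $a_{\min}$ is chosen to be least general among the feasible alternatives under $\preceq_{\{\sf A\}}$, so the computed $X\sigma$ is at least as good as any syntactic lgg, hence $\{\sf A\}$-refined; soundness that $X\sigma$ is a genuine generalization follows from the soundness of $\mathcal{G}_A$ already proved.

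The main obstacle I expect is the careful amortized bookkeeping that pins down the quadratic bound rather than something worse. The subtlety is that each individual choice-function evaluation runs $\mathcal{G}_{base}$ to completion on a candidate decomposition, and naively one might fear these nested linear runs compound to something like $O(n^2)$ \emph{per step} and $O(n^3)$ overall. The argument must show that the candidate derivations $D(a)$ evaluated at a given node only re-solve the local decomposition (the two AUPs and their immediate heads needed to compare generality), not the entire remaining problem, so that the work charged at each node is genuinely $O(n)$ and the total telescopes to $O(n^2)$. Making this locality precise --- i.e.\ that comparing $X\sigma_a$ for competing $a$'s needs only bounded-depth information about the decomposition rather than a full recursive solve --- is where the real care lies, and it is exactly the point that separates the constant-branching quadratic case here from the cubic bound claimed for arbitrary associative AUPs.
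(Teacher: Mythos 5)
Your proposal follows essentially the same route as the paper's (very terse) proof: reduce to the single-path, linear-time argument of Theorem~\ref{thm:AUPAnti}, and charge each branching point the cost of running the linear $\mathcal{G}_{base}$ procedure on the at most $k$ candidate decompositions, so that $O(k)=O(1)$ linear runs per choice point times $O(n)$ choice points gives $O(n^2)$. The paper states exactly this in two sentences; your write-up supplies the induction and the bookkeeping it omits.

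One clarification on the ``main obstacle'' you flag at the end: the feared compounding to $O(n^3)$ does not arise, and the resolution is not a locality or bounded-depth argument about the comparison of the $X\sigma_a$'s. By the definition of the choice function, each candidate derivation $D(a)$ performs the single equational decomposition $\mathit{dec}(P,a)$ and then runs \emph{only} $\mathcal{G}_{base}$ to completion; it never re-invokes the choice function or any equational rule. Hence each $D(a)$ is one straight linear pass over the subproblem, each choice-point evaluation costs $O(k\cdot n)=O(n)$ outright, and the quadratic total follows without any amortized analysis. The cubic bound for the unrestricted case (Theorem~\ref{thm:AUPFull}) comes from the branching factor growing to $O(n)$, not from the candidate runs becoming recursive.
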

\begin{proof}

This follows from the existence of a linear algorithm for the computation of lggs using  $\mathcal{G}_{base}$ and the linear time algorithm of theorem~\ref{thm:AUPAnti}. Note that $k$ is constant and thus does not  show up in complexity statement. 
\end{proof}

\subsection{Step Optimal Generalization for Full Associativity}
\label{subsec:full}
Completely dropping the determinedness restrictions on the AUPs containing associative function symbols is the same as considering $O(n)$-determined AUPs. We have already shown that this problem is naively solvable by an exponential procedure, even when we consider $O(1)$-determined AUPs. In this section we again consider the problem of finding a $(\mathcal{R}_{A}^{^{O(n)}},C_{(\mathcal{R}_{A}^{O(n)},\mathcal{G}_{A})},\mathcal{G}_{A})$-optimal generalization where $n$ in the landau-notation refers to the maximum number of arguments of any subterms  in the given AUP.  However, this time the resulting algorithm is  cubic in complexity being that $r$ in r--determined is no longer a constant. By $\mathcal{R}_{A}^{^{O(n)}}$ we mean the singleton rigidity function which instead of looking for an $r$-determined subsequence just considers the largest feasible multiset intersection.

\begin{theorem}
\label{thm:AUPFull}
A $(\mathcal{R}_{A}^{^{O(n)}},C_{(\mathcal{R}_{A}^{O(n)},\mathcal{G}_{A})},\mathcal{G}_{A})$-optimal higher-order $\{{ \sf A}\}$-refined  pattern generalization for an AUP $X(\vect{x}):t\triangleq s$ can be computed in $O(n^3)$ time where $n$ is the size the AUP. 
\end{theorem}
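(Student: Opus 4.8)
\textbf{Proof proposal for Theorem~\ref{thm:AUPFull}.}
The plan is to reduce the unrestricted associative case to the machinery already established for the total $k$-determined case in Theorem~\ref{thm:AUPWeakAnti}, tracking how the complexity degrades when the branching factor $r$ is allowed to grow with the input rather than remaining constant. First I would observe that an arbitrary AUP $X(\vect{x}):t\triangleq s$ with associative function symbols is, by definition, an $O(n)$-determined AUP, where $n$ bounds the number of arguments of any subterm: the determinedness parameter is simply no longer fixed. Consequently the same $(\mathcal{R},C,\mathcal{G})$-optimal strategy applies, but with $\mathcal{R}_{\sf A}^{O(n)}$ in place of $\mathcal{R}_{\sf A}^{k}$. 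The key semantic point to verify is that $\mathcal{R}_{\sf A}^{O(n)}$, which selects alignments according to the largest feasible multiset intersection rather than committing to an $r$-bounded common subsequence, still returns a nonempty, well-defined set of candidate singleton alignments whenever a decomposition is possible, so that the choice function $C_{(\mathcal{R}_{\sf A}^{O(n)},\mathcal{G}_{\sf A})}$ is total on the relevant states.

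Next I would carry out the complexity accounting, which is where the cubic bound emerges. As in the proof of Theorem~\ref{thm:AUPAnti}, the greedy derivation proceeds by repeatedly selecting a single decomposition alignment and recursing; the recursion structure that gave linear behaviour in the $1$-determined case is preserved, since at each branching point we still commit to exactly one alignment chosen by $C$. The extra cost lies entirely inside the evaluation of the choice function: for each candidate alignment $a$ in the branching $\mathit{B}(\mathcal{R}_{\sf A}^{O(n)},P)$, we run the linear-time $\mathcal{G}_{base}$ procedure of~\cite{DBLP:journals/jar/BaumgartnerKLV17} to produce $\sigma_a$ and then compare the resulting $X\sigma_a$ for generality, selecting $a_{\min}$. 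When $r$ was a constant this comparison contributed only a constant factor; now the number of feasible alignments to weigh is itself $O(n)$, and each costs linear time to develop and compare, so evaluating the choice function at a single state costs $O(n^2)$. Since the greedy strategy descends through $O(n)$ levels of decomposition before exhausting the occurrences of associative symbols of maximal arity, the overall running time is $O(n)\cdot O(n^2)=O(n^3)$ in the size of the AUP.

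The main obstacle I anticipate is justifying that the $O(n)$-level bound on the depth of the greedy decomposition is genuinely correct under the relaxed rigidity function, and that the ``largest feasible multiset intersection'' notion underlying $\mathcal{R}_{\sf A}^{O(n)}$ does not hide an additional source of branching that would spoil the linear number of levels. Concretely, one must argue that committing to the alignment $a_{\min}$ leaves subproblems each of which again only contains associative symbols of strictly smaller maximal arity (or strictly fewer such occurrences), exactly as in the inductive step of Theorem~\ref{thm:AUPAnti}, so that the same outer induction on the number of occurrences of the maximal-arity symbol carries through. Once that structural claim is in place, the soundness and $E$-refinedness of the computed generalization follow from the soundness of $\mathcal{G}_{\sf A}$ together with the definition of $(\mathcal{R},C,\mathcal{G})$-optimality, and the cubic bound is immediate from the level-times-per-level-cost product above.
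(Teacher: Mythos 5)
Your proposal is correct and follows exactly the argument the paper intends: the paper in fact states Theorem~\ref{thm:AUPFull} without an explicit proof, relying on the earlier remark that a linearly bounded branching factor yields a cubic algorithm, and your accounting ($O(n)$ candidate alignments per choice point, each developed in linear time by $\mathcal{G}_{base}$, over $O(n)$ decomposition levels) is precisely that reasoning made explicit. Your flagged obstacle about preserving the inductive structure of Theorem~\ref{thm:AUPAnti} under the relaxed rigidity function is a reasonable point of care, but it does not change the approach.
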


Now that we have completed our analysis of associative function symbols, the simplest of the cases we consider, we move on to the more interesting cases of unit and commutative decomposition as well as the combinations of these algebraic properties.

\section{Commutative Case}
\label{sect:special:C}

Notice that in the case of commutative decomposition if all four terms (or three terms) have the same head symbol we end up with similar issues as in the associativity case. We can use strict $2$-determined to restrict the considered AUPs.

\begin{theorem}
\label{thm:AUPAntiCU}
A higher-order $\{C\}$-refined pattern generalization, for a {\em total strict 1-determined} AUP can be computed in linear time.
\end{theorem}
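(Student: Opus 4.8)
The plan is to mirror the proof of Theorem~\ref{thm:AUPAnti} (the associative case) and adapt it to commutative decomposition under the strict-determinedness restriction. First I would dispose of the trivial cases: if the AUP contains no commutative function symbol, then its $E$-refined generalization is just the syntactic lgg, computable in linear time by $\mathcal{G}_{base}$~\cite{DBLP:journals/jar/BaumgartnerKLV17}. So I may assume the head symbol is some $f$ with $\mathit{Ax}(f)=\{\sC\}$, and the AUP has the form $X(\vect{x}):f(t_1,t_2)\triangleq f(s_1,s_2)$, since commutative symbols take exactly two arguments.

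The key observation is that the {\sf Dec-C} rule offers only two alternatives: match $t_1$ with $s_1$ and $t_2$ with $s_2$, or match $t_1$ with $s_2$ and $t_2$ with $s_1$. The \emph{total strict $1$-determined} condition is precisely what forces one of these two alternatives to be unambiguously selected. I would argue that $\skdet{1}{\pahs(t}{s)}$ being a singleton (as guaranteed by strict $1$-determinedness) means there is a unique alignment of the argument heads, so exactly one of the two {\sf Dec-C} branches leads to AUPs whose heads can be further generalized nontrivially, while the other branch produces generalization variables and is therefore no better than the syntactic lgg. Concretely, if $\head(t_1)=\head(s_1)$ and these differ from the cross comparisons, the strict $1$-determined condition singles out the identity alignment; if instead $\head(t_1)=\head(s_2)$, it singles out the swap. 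In either case the branching is resolved deterministically at this node, with no search required.

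I would then proceed by induction, analogous to the associative proof, on the structure of the term. Having fixed the single feasible {\sf Dec-C} application, I recurse into the two resulting AUPs $X_1(\vect{x}):t_1\triangleq s_i$ and $X_2(\vect{x}):t_2\triangleq s_{(i\bmod 2)+1}$; the \emph{total} part of total strict $1$-determined guarantees exactly that each of these recursive subproblems is again strict $1$-determined, so the induction hypothesis applies and each is solved in linear time. Since each argument head comparison is examined a constant number of times (at most twice, for the two orientations) and each subterm is visited once, the total work is linear in the size of the AUP. The base case is a subterm of basic type with distinct heads, handled by {\sf Sol} in constant time.

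The main obstacle I anticipate is justifying rigorously that strict $1$-determinedness genuinely collapses the two-way commutative branching into a single \emph{$E$-refined} choice — that is, that the discarded orientation can never yield a strictly less general generalization. Unlike the associative case, where Dec-A-L/R correspond to different groupings, here both orientations are always syntactically applicable, so the argument must show that whenever both orientations survive as feasible, they must produce $\preceq_\cE$-incomparable or equally good results, and that the determinedness hypothesis excludes exactly the genuinely ambiguous situation (such as $\skdet{1}{(a,b)}{(b,a)}$, which has two blocks and is therefore not strict $1$-determined). Care is also needed because commutativity, unlike associativity, does not reduce the arity, so the induction must be framed on term depth or size rather than on the maximal arity of the commutative symbol.
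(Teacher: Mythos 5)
Your proposal takes essentially the same route as the paper: the paper's own proof of this theorem is literally the one-line remark that the argument is ``similar to the proof of Theorem~\ref{thm:AUPAnti}'', and your adaptation --- using strict $1$-determinedness to collapse the two-way {\sf Dec-C} branching into a single feasible orientation and recursing via totality --- is exactly the intended instantiation, worked out in more detail than the paper itself provides. The obstacle you flag concerning the swapped orientation $f(t_1,t_2)\triangleq f(s_2,s_1)$ is acknowledged by the paper in the remark immediately following the theorem, so your caution there is well placed rather than a defect of your argument.
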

\begin{proof}
Similar to the proof of Theorem \ref{thm:AUPAnti}. 
\end{proof}

Note that the case $f(t_{1},t_{2}) \triangleq f(s_{1},s_{2})$,  where $\mathit{head}(t_1)=\mathit{head}(s_1)$ and $\mathit{head}(t_2)=\mathit{head}(s_2)$, is considered by the procedure of Theorem \ref{thm:AUPAntiCU}, but not $f(t_{1},t_{2}) \triangleq f(s_{2},s_{1})$ This is an issue with the definition of total strict 1-determined. We can fix this problem by performing an addition check to see if a permutation of the terms on the left or right side results in a better alignment. We now present a procedure for full commutativity, that is without restrictions which has a quadratic complexity (see Theorem \ref{thm:AUPWeakAnti}. 

\begin{definition}
Let $w_{1} =  (a_1,\ldots, a_n)$ and  $w_{2} = (b_1,\ldots, b_m)$ be sequences of symbols and $k\ge 1$ be an integer. We define the rigidity function $\mathcal{R}_{\sf C}$ returning all alignments.
\end{definition}

When the rigidity function $\mathcal{R}_{\sf C}$ is used all by our procedure there will be at most 4 alignments.
\begin{corollary}
\label{thm:AUPCU}
A  $(\mathcal{R}_{\sf C},C_{(\mathcal{R}_{\sf C},\mathcal{G}_{\left\lbrace {\sf C}\right\rbrace })},\mathcal{G}_{\left\lbrace {\sf C}\right\rbrace })$-optimal  higher-order $\{ C\}$-refined pattern generalization  for an AUP can be computed in quadratic time.
\end{corollary}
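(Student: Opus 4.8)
The plan is to reduce the full commutative case to the already-established quadratic bound of Theorem~\ref{thm:AUPWeakAnti} by showing that commutativity is an instance of the $k$-determined machinery with a constant branching factor. First I would observe that for a commutative symbol $f$, every AUP of the form $X(\vect{x}):f(t_1,t_2)\triangleq f(s_1,s_2)$ has exactly two arguments on each side (the symbols are not associative, so no flattening produces longer argument lists). Consequently the rule {\sf Dec-C} admits only the two alignments corresponding to $i\in\{1,2\}$, and the rigidity function $\mathcal{R}_{\sf C}$, which returns all alignments, returns at most the four singleton pairings of $\{t_1,t_2\}$ against $\{s_1,s_2\}$. After filtering by $\mathcal{G}$-feasibility in the $(\mathcal{R}_{\sf C},P,\mathcal{G})$-branching, the set $\mathit{B}(\mathcal{R}_{\sf C},P)$ has at most the two full decompositions matching the two orderings, so the branching factor is a constant bounded by $4$.

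Next I would invoke the choice-function construction verbatim from Section~\ref{subsec:RD}: for each feasible decomposition $\mathit{dec}(P,a)\in\mathit{B}(\mathcal{R}_{\sf C},P)$ the derivation $D(a)$ runs $\mathcal{G}_{base}$ on the resulting AUPs, which by~\cite{DBLP:journals/jar/BaumgartnerKLV17} computes a syntactic lgg in linear time, and $C_{(\mathcal{R}_{\sf C},\mathcal{G}_{\{{\sf C}\}})}$ selects the alignment $a_{\min}$ whose computed substitution is least general among the (constantly many) candidates. Since comparing two patterns for $\preceq_{\mathcal{E}}$ generality is itself a linear-time check on the output of $\mathcal{G}_{base}$, evaluating the choice function at a single commutative node costs $O(n)$. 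Because each node's two subproblems are strictly smaller and the commutative symbols do not nest in the flattened-arity sense that associativity does, a standard recursion argument shows the total work is bounded by the sum over all nodes of the local linear cost, giving $O(n^2)$ overall, exactly as in Theorem~\ref{thm:AUPWeakAnti}.

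The structure of the argument mirrors the proof of Theorem~\ref{thm:AUPWeakAnti} almost exactly, which is why the statement is phrased as a corollary: commutativity with its fixed binary arity is the clean special case in which $k$ is bounded by a small constant and the strict-$k$-determined sets $\skdet{k}{\cdot}{\cdot}$ play the role that $\kdet{k}{\cdot}{\cdot}$ played for associativity. I would make explicit the remark preceding the definition of $\mathcal{R}_{\sf C}$ — that using $\mathcal{R}_{\sf C}$ yields at most four alignments — since this constant bound is precisely what licenses transferring the quadratic complexity rather than paying the exponential cost of enumerating all decompositions.

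The main obstacle I expect is the correctness side rather than the complexity bound: I must argue that the $(\mathcal{R}_{\sf C},C,\mathcal{G}_{\{{\sf C}\}})$-optimal generalization produced by greedily committing to $a_{\min}$ at each node is genuinely a $\{C\}$-refined generalization, i.e.\ at least as good as a syntactic lgg. This requires checking that the permutation issue flagged after Theorem~\ref{thm:AUPAntiCU} — where $f(t_1,t_2)\triangleq f(s_2,s_1)$ is missed by strict $1$-determinedness — is fully absorbed by having $\mathcal{R}_{\sf C}$ return \emph{all} alignments, so that the better of the two orderings is always among the candidates the choice function compares. Justifying that the local greedy choice does not sacrifice global optimality (that selecting the least-general syntactic lgg at each independent commutative position composes into a least-general overall $\{C\}$-refined generalization) is the step that needs the most care, and I would lean on the independence of the two subterm subproblems under a commutative head to make the compositional argument go through.
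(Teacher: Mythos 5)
Your argument is correct and follows essentially the same route the paper intends: the corollary is stated as an instance of the quadratic bound of Theorem~\ref{thm:AUPWeakAnti}, justified by the remark that $\mathcal{R}_{\sf C}$ yields at most four alignments at any commutative node, so the branching factor is constant and the per-node linear cost of running $\mathcal{G}_{base}$ to evaluate the choice function sums to $O(n^2)$. Your additional care about the permutation issue and about the identity alignment being among the candidates (which guarantees the result is at least as good as the syntactic lgg, hence $\{{\sf C}\}$-refined) is a welcome elaboration of what the paper leaves implicit.
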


\section{Associative-Commutative Case}
\label{sect:special:AC}

In this section we consider functions $f$ such that $Ax(f) = \lbrace \sf A,C\rbrace$. Unfortunately, when a function is both associative and commutative, the number of possible decomposition paths is even greater than the previously considered cases and thus we need to further restrict the term structure. To provide a better understanding of why this is the case, consider a $k$-determined AUP where the multiset intersection is of size $O(k)$ on only contains one function symbol. This implies that there are $O(k^2)$ possible decompositions of the terms in the first multiset intersection of the terms containing $k$ alignments. This is not even considering that there might be more than one function symbol in the AUP. The problem is that the more terms with the same head symbol, the more combinations we must check. Unlike commutativity, which considers a fixed number of terms, and associativity, which enforces an ordering on terms, associative-commutativity allows an arbitrary number of arguments with no fixed ordering on the terms. We can get around this problem by considering special cases of AUPs  where arguments of an associative-commutativity symbol have distinct heads.

%\subsection{AC Generalization}

Unfortunately, the concept of (strict) $k$-determined AUPs does not lead to a linear algorithm in the case of AC-generalization.  Actually, this concept is not even meaningful for such an equational theory, since terms are not ordered in any particular way. Instead, we need to consider so called $(k,l)$-distinct AUPs, which are defined as follows:

\begin{definition}
Let $P\equiv X(\vect{x}):\lambda x_{1},\ldots, x_{l}.f(t_{1},\ldots,t_{n}) \triangleq \lambda y_{1},\ldots, y_{k}.$ $f(s_{1},\ldots,s_{m})$, $\pahm(f(t_{1},$ $\ldots,t_{n}),f(s_{1},\ldots,s_{m})) =  \left\langle T,S\right\rangle$, and $Ax(f)=\{\sf A,C \}$.  We say that $P$ is {\em $(k,l)$-distinct} if  each $h \in T\cap S$ occurs at most $k$ times in $w_{1}$ and at most $k$ times in $w_{2}$, the number of symbols in  $T \cap S \leq l$ and $T\setminus (T \cap S ) \emptyset$ iff  $S\setminus (T \cap S ) \emptyset$. We say $P\equiv X(\vect{x}): \lambda x_{1},\ldots, x_{w}.t\triangleq \lambda y_{1},\ldots, y_{r}.s$ is {\em total $(k,l)$-distinct} if $|t| = |s|=1$ or for every pair of subterms $(t',s')$ of $t$ and $s$ such that $\mathit{head}(t')=\mathit{head}(s')$, the AUP  $Y(\vect{y}): t'\triangleq s'$ is total $(k,l)$-distinct.
\end{definition}

This concept is much simpler than $k$-determined in that it basically splits the arguments of the left and right side of the given AUP  into at most $l$ sections dependent on the head symbols of the arguments. Also, for head function symbol, there should be at most $k$ occurrences of it and the result of decomposition can only be an empty term if the terms of both the left and right side of the AUP are empty. 

When an AUP is total $(1,l)$-distinct there is only one way to decompose the AUP, i.e. either a given symbol shows up in both $w_{1}$ and $w_{2}$ once and can be aligned, or it cannot be aligned and is generalized by a new variable.

This leads to the following results: 

\begin{theorem}
\label{ACOne}
A higher-order $\{{ \sf A,C}\}$-refined pattern generalization for a  {\em total $(1,l)$-distinct} AUP can be solved in linear time.
\end{theorem}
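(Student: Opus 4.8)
\textbf{Proof proposal for Theorem~\ref{ACOne}.}

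The plan is to mirror the structure of the proof of Theorem~\ref{thm:AUPAnti}, replacing the associative $k$-determinedness machinery by the $(1,l)$-distinctness condition and argue that in the total $(1,l)$-distinct case every AC-decomposition step is forced, so that no branching survives and the whole computation reduces to a single derivation in $\mathcal{G}_{\sA\sC}$ whose length and per-step cost are linear in the size $n$ of the AUP. First I would handle the trivial cases: if the head symbols of $t$ and $s$ differ, or the common head $f$ is free, then the base algorithm $\mathcal{G}_{base}$ applies and computes the lgg (which is automatically an $\{\sA,\sC\}$-refined generalization) in linear time by the result of~\cite{DBLP:journals/jar/BaumgartnerKLV17}. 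The remaining case is $t=\lambda\vect{x}.f(t_1,\ldots,t_n)$ and $s=\lambda\vect{y}.f(s_1,\ldots,s_m)$ with $Ax(f)=\{\sA,\sC\}$, and it is here that total $(1,l)$-distinctness is used.

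The key observation, which I would isolate as the core lemma, is the one already stated informally in the excerpt just before the theorem: when the AUP is total $(1,l)$-distinct, each head symbol $h$ occurring in $T\cap S$ appears \emph{exactly once} on each side, and each head symbol outside $T\cap S$ cannot be aligned at all. Consequently the multiset intersection determines a unique pairing of arguments: for every $h\in T\cap S$ there is exactly one $t_i$ and exactly one $s_j$ with $\head(t_i)=\head(s_j)=h$, and the forced decomposition matches $t_i$ with $s_j$. The arguments whose heads lie in $T\setminus(T\cap S)$ (resp.\ $S\setminus(S\cap T)$) are grouped and generalized by a fresh generalization variable. I would argue that any application of \textsf{Dec-AC-L} or \textsf{Dec-AC-R} that does \emph{not} respect this unique pairing produces a pair whose heads differ, hence yields a strictly more general result (the offending subterms get abstracted by a variable), so the greedy/forced choice is the least general one and is the one the choice function selects. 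Because $k=1$, the branching factor $\mathit{B}(\mathcal{R},P)$ collapses to a single feasible decomposition, and there is nothing to search over.

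With uniqueness of the decomposition established, the complexity argument proceeds by induction on the term structure, exactly as in Theorem~\ref{thm:AUPAnti}. Building the pairing at a single node amounts to computing $\pahm(t,s)$ and matching equal heads, which (after a bucket sort of heads, or using the $\leq l$ distinct sections guaranteed by $(1,l)$-distinctness) can be done in time linear in the number of arguments at that node; since $k=1$, no head is matched more than once, so there is no combinatorial blow-up from repeated symbols. Each argument of $t$ and of $s$ is visited in exactly one subproblem, and the unmatched arguments are collected into generalization terms whose construction is again linear. Summing the per-node linear work over the tree and invoking the induction hypothesis on each aligned pair $\langle t_i,s_j\rangle$ (which is itself total $(1,l)$-distinct by definition) gives total time $O(n)$ in the size of the AUP.

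The main obstacle I expect is the precise justification that the forced pairing really yields the least general (refined) generalization rather than merely \emph{a} generalization: one must show that matching the unique same-head arguments is never dominated by a decomposition that leaves some of them unaligned, and that grouping the non-shared heads under a single fresh variable does not lose generality relative to any alternative grouping. This requires a careful appeal to the definition of $\{\sA,\sC\}$-refined generalization together with the uniqueness coming from $k=1$; the danger is subtle interaction between the associative regrouping of the unmatched block and the commutative reordering. I would address it by noting that, since the unmatched heads on the two sides share no common symbol, \emph{every} way of decomposing that block produces pairs with distinct heads and hence a variable-headed generalization term of the same generality, so all such choices are $\simeq_{\{\sA,\sC\}}$-equivalent and the linear-time greedy choice is sound and optimal.
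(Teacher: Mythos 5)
Your proposal matches the paper's approach: the paper's own proof is literally ``Similar to the proof of Theorem~\ref{thm:AUPAnti},'' and your expansion — using total $(1,l)$-distinctness to force a unique pairing of same-head arguments (each common head occurring exactly once per side), collecting the non-shared arguments under a fresh variable, and inducting on term structure for linearity — is exactly the argument the paper sketches informally in the paragraph preceding the theorem. Your identification of the forced-decomposition lemma and the handling of the unmatched block supplies precisely the details the paper leaves implicit.
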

\begin{proof}
Similar to the proof of Theorem~\ref{thm:AUPAnti}.
\end{proof}

If we attempt to relax these constraints the time complexity of the algorithm increases substantially, even when we consider the case of $(2,l)$-distinct AUPs under our restricted optimality condition. 

\begin{definition}
Let $w_{1} =  (a_1,\ldots, a_n)$ and  $w_{2} = (b_1,\ldots, b_m)$ be sequences of symbols. We define the singleton rigidity function $\mathcal{R}_{AC}^{(k,l)}$ as follows 
\begin{equation}
\mathcal{R}_{AC}^{(k,l)}(w_1,w_2) = \left\lbrace 
\begin{array}{c|c} 
\left\lbrace a_{l}\left[ i,j\right]\ \middle\vert\ 
\begin{array}{c} a_{i} = b_{j} \\ 1\leq i \leq n\\ 1\leq j\leq m  \end{array}\right\rbrace  &  
\begin{array}{c}  \mbox{if }(w_{1},w_{2}) \mbox{ is } (k,l)\mbox{-distinct} \end{array}\\
 \emptyset &  \mbox{otherwise}
\end{array} \right. 
\end{equation}
\end{definition}

\begin{theorem}
\label{thm:AUPWeakAC}
A $(\mathcal{R}_{AC}^{(k,l)},C_{(\mathcal{R}_{AC}^{(k,l)},\mathcal{G}_{AC})},\mathcal{G}_{AC})$-optimal higher-order $\{{ \sf A,C}\}$-refined pattern generalization for a {\em total $(k,l)$-distinct} AUP can be computed in  $O(k^{2\cdot l}\cdot n^{2})$ time for $n$ as input size.
\end{theorem}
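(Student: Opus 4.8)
The plan is to follow the same greedy-decomposition strategy used in Theorems~\ref{thm:AUPAnti} and~\ref{thm:AUPWeakAnti}, but carefully account for the combinatorial blow-up caused by having up to $k$ occurrences of each shared head symbol. Let $P\equiv X(\vect{x}):\lambda \vect{x}.f(t_1,\ldots,t_n)\triangleq \lambda \vect{y}.f(s_1,\ldots,s_m)$ be total $(k,l)$-distinct with $Ax(f)=\{{\sf A},{\sf C}\}$. First I would reduce the problem to the case where the top symbol is AC (if not, $\head(t)\neq\head(s)$ forces a \textsf{Sol} step, or the free/commutative cases reduce to the earlier theorems), and isolate a single decomposition step at the AC-symbol $f$. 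The key structural observation, provided by the $(k,l)$-distinct condition, is that the shared multiset $T\cap S$ contains at most $l$ distinct head symbols, and each such head occurs at most $k$ times on each side. Thus any alignment returned by $\mathcal{R}_{AC}^{(k,l)}$ must pair, for each shared head $h$, one of its $\le k$ occurrences in $w_1$ with one of its $\le k$ occurrences in $w_2$, giving at most $k^2$ choices per shared head and therefore at most $(k^2)^l = k^{2l}$ feasible alignments in $\mathit{B}(\mathcal{R}_{AC}^{(k,l)},P)$.

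Next I would describe the work done by the choice function $C_{(\mathcal{R}_{AC}^{(k,l)},\mathcal{G}_{AC})}$ at this node. For each of the $\le k^{2l}$ candidate alignments $a$, the choice function runs the derivation $D(a)$ which performs the AC-decomposition $dec(P,a)$ and then solves each resulting AUP \emph{syntactically} via $\mathcal{G}_{base}$; by the linear-time guarantee of~\cite{DBLP:journals/jar/BaumgartnerKLV17} each such $D(a)$ costs $O(n)$, and comparing the $\le k^{2l}$ computed generalizations $X\sigma_a$ to select $a_{\min}$ costs a further $O(k^{2l}\cdot n)$. Hence choosing the optimal alignment at a single AC-node costs $O(k^{2l}\cdot n)$. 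I would then argue, exactly as in the proof of Theorem~\ref{thm:AUPAnti}, that once $a_{\min}$ is fixed the algorithm commits to the state $dec(P,a_{\min})$ and recurses; because $P$ is \emph{total} $(k,l)$-distinct, every sub-AUP $Y(\vect{y}):t'\triangleq s'$ produced with $\head(t')=\head(s')$ is again total $(k,l)$-distinct, so the same bound applies recursively and no branch ever needs to be revisited.

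The final step is the global accounting. Summing the per-node cost over the whole derivation, I would observe that the sizes of the sub-AUPs produced by one decomposition step partition (up to constants) the size of the parent, so the total of the $O(n)$-per-alignment factors over all nodes telescopes to $O(n^2)$, while the multiplicative branching factor $k^{2l}$ is paid at each of the linearly-many AC-nodes; combining these yields the claimed $O(k^{2l}\cdot n^2)$ bound. The induction on the maximal arity of occurrences of AC-symbols (as in Theorem~\ref{thm:AUPAnti}) together with an outer induction on the number of AC-symbols of maximal arity supplies the formal scaffolding.

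The main obstacle I anticipate is justifying that the branching factor is genuinely bounded by $k^{2l}$ rather than by something larger coming from the freedom of AC-grouping. Unlike the associative case, an AC-symbol admits arbitrary regroupings of its arguments, so I must argue that the $(k,l)$-distinct restriction makes $\mathcal{R}_{AC}^{(k,l)}$ return only singleton alignments pairing equal heads, and that any grouping not of this form yields a strictly more general (hence non-optimal, discardable) result. Establishing that $a_{\min}$ computed greedily at each node really does deliver a globally $(\mathcal{R}_{AC}^{(k,l)},C,\mathcal{G}_{AC})$-optimal generalization—i.e.\ that local optimality with respect to $\preceq_\cE$ composes into the optimality required by Definition~\ref{ouroptimal}—is the delicate point, and I would handle it by the same ``none of the obtained term-pairs share a head nor share $f$, so no further decomposition is forced'' argument used in the associative proof, adapted to the multiset setting.
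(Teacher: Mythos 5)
Your argument is essentially the paper's own proof, elaborated: the paper likewise counts $O(k^{2})$ pairings per shared head symbol across $l$ blocks to get $O(k^{2l})$ candidate decompositions, each evaluated by a linear-time $\mathcal{G}_{base}$ run, and accumulates this over the derivation to reach $O(k^{2l}\cdot n^{2})$. Your additional discussion of the recursion through total $(k,l)$-distinctness and the global accounting fills in details the paper leaves implicit, but the approach is the same.
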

\begin{proof}
There are $O(k^{2})$ ways to pair the terms with the same head and there are $l$ blocks thus there are $O(k^{2\cdot l})$ computations using $\mathcal{G}_{base}$ (complexity $O(n)$) to be performed on an AUP with size $n$. 
\end{proof}
Obviously, computing the full set of $E$-refined generalizations from the results of Theorem~\ref{thm:AUPWeakAC} using a naive method would take in the order of $O(k^{2\cdot l\cdot n})$ time.

\section{Conclusion}
\label{sect:concl}
Higher-order equational anti-unification algorithm presented in this paper combines higher-order syntactic anti-unification rules with the decomposition rules for associative, commutative and associative-commutative function symbols. This gives a modular algorithm, which can be used for problems with different symbols from different theories without any adaptation. 

Higher order A-, C-, and AC-anti-unification problems are finitary. In practice, often it is desirable to compute only one answer, which is the best one with respect to some predefined criterion. We defined such an optimality criterion, which basically means that an optimal equational solution should be at least a good as the syntactic lgg. We then identified problem forms for which optimal solutions can be computed fast (in linear or polynomial time) by a greedy approach.

\bibliography{hoepau}

\end{document}